\documentclass[a4paper]{article}

\usepackage{url}

\usepackage{amsmath}
\usepackage{amssymb}
\usepackage{amsthm}
\usepackage{latexsym}

\usepackage{tikz}
\usetikzlibrary{positioning}
\usetikzlibrary{shapes}
\usetikzlibrary{automata}
\usetikzlibrary{backgrounds}
\usetikzlibrary{fit}
\usetikzlibrary{decorations.pathmorphing}
\usetikzlibrary {arrows.meta}

\theoremstyle{plain}
\newtheorem{theorem}{Theorem}
\newtheorem{lemma}[theorem]{Lemma}
\newtheorem{corollary}[theorem]{Corollary}

\theoremstyle{definition}
\newtheorem{definition}[theorem]{Definition}
\newtheorem{example}[theorem]{Example}

\newcommand{\Prop}{\mathsf{Prop}}
\newcommand{\Ag}{\mathsf{Ag}}
\newcommand{\Ac}{\mathsf{Ac}}
\newcommand{\pos}{\mathsf{V}}
\newcommand{\E}{\mathsf{E}}
\newcommand{\val}{\ell}
\newcommand{\pow}{\mathcal{P}}
\newcommand{\I}{\mathrm{I}}
\newcommand{\K}{\mathsf{K}}
\newcommand{\CK}{\mathsf{C}_G}
\newcommand{\Str}{\mathsf{Str}}
\newcommand{\last}{\mathsf{last}}
\newcommand{\Hist}{\mathsf{Hist}}
\newcommand{\CGS}{\mathcal{G}}

\newcommand{\X}{\mathsf{X}}
\newcommand{\kd}{\mathsf{kd}}
\newcommand{\len}{\mathsf{len}}

\newcommand{\lang}{\mathcal{L}}
\newcommand{\langCK}{\lang^{\mathsf{C}}}

\title{Knowledge and Common Knowledge of Strategies\thanks{Research supported by the Swiss National Science Foundation project no.~227544 \emph{Epistemic Group Attitudes}}} 


\author{Borja Sierra Miranda \and Thomas Studer}
\date{\small{Institute of Computer Science\\University of Bern\\Switzerland}}

\begin{document}

\maketitle

\begin{abstract}
Most existing work on strategic reasoning simply adopts either an informed or an uninformed semantics. 
We propose a model where knowledge of strategies can be specified on a fine-grained level.  In particular,  it is possible to 
distinguish first-order, higher-order, and common knowledge of strategies.
We illustrate the effect of higher-order knowledge of strategies by studying the game Hanabi.  Further, we show that common knowledge of strategies is necessary to solve the consensus problem.  Finally, we study the decidability of the model checking problem.
\end{abstract}


\section{Introduction}

Strategic reasoning is a fundamental aspect of decision-making in multi-agent systems, game theory, and artificial intelligence. It involves the ability of agents to anticipate the actions of others, formulate plans, and achieve desired outcomes within competitive or cooperative settings. 
%
To formally analyze strategic interactions, researchers have developed logical frameworks such as 
Coalition Logic~\cite{10.1093/logcom/12.1.149},  Alternating-time Temporal Logic~\cite{10.1145/585265.585270},  and Strategy Logic~\cite{CHATTERJEE2010677,10.1145/2631917}.
Many of these systems have been extended to allow for epistemic reasoning~\cite{ijcai2021p246,DBLP:conf/kr/MaubertM18,HoekW03a}.   

Most works adopt one of two semantics regarding knowledge of strategies: in the uninformed semantics, agents do not know each other's strategies, e.g.~\cite{DBLP:conf/kr/MaubertM18,HoekW03a}, while in the informed one,  agents know everyone's strategies, e.g.~\cite{Saffidine_Schwarzentruber_Zanuttini_2018}.
The uninformed semantics models situations where agents have neither a priori knowledge of other agents' strategies nor any means to communicate information about their strategies.  In the informed semantics, on the other hand,  the agents' strategies are common knowledge,  which makes it possible for the agents to infer additional information from the observed actions.

A notable exception to the above-mentioned dichotomy is~\cite{ijcai2021p246}, which introduces a novel semantics for strategy logic where one can specify, for each agent $a$, the set of agents $A_a$ whose strategy $a$ is informed of.  
Hence, it captures the case of informed semantics,  the case of uninformed semantics, and intermediate cases where, e.g., some agents may privately communicate their strategies to other agents.

The logical presentation of strategic games introduced in~\cite{Artemov2014} also  includes knowledge of strategies via formulas of the form $\K_a s_j^l$ stating \emph{$a$ knows that $j$ has chosen her strategy~$l$}.  Artemov uses this presentation to study whether a game has a definitive solution under given epistemic/rationality conditions. Since the paper does not introduce a concrete language or deductive system,  questions of expressivity, decidability or complexity are not  discussed. 

\textbf{Contribution.}
We introduce a noval approach where knowledge of strategies can be specified on a  fine-grained level.
In particular, our formulation makes it possible to distinguish first-order and various forms of higher-order knowledge of strategies. Moreover, we can express that \emph{everybody knows} someone's strategy but also that a strategy is \emph{common knowledge} among a group of agents~\cite{fhmv95}.

Our semantics is based on concurrent game structures.  In order to specify (higher-order) knowledge of an agent about other agents' strategies, we introduce the notion of an \emph{information perspective} that can be seen as a generalization of the sets $A_a$ from ~\cite{ijcai2021p246} mentioned before. An information perspective is a set $\I$ containing sequences of agents.  
Such sequences are interpreted as follows: if, e.g.,  $abc \in \I$,  then agent $a$ knows that agent $b$ knows agent $c$'s strategy.
The logical language we evaluate over these structures is the usual language of epistemic logic extended with a temporal \emph{next}-operator.  This is sufficient for our examples and to investigate the power and the effect of information perspectives. 

We present two small examples showing the difference between having no knowledge and first-order knowledge of an agent's strategy.  These examples will be useful to investigate basic epistemic properties such as positive and negative introspection and reflection.
Moreover,  these initial examples also show that the concurrent game structure is general background knowledge for all agents.  The same phenomenon occurs in traditional epistemic logic where the Kripke structure is common knowledge among all agents~\cite{Artemov2024}.

Hanabi is a cooperative game with imperfect information.  We consider epistemic situations in this game to investigate the role of higher-order knowledge of strategies.  First, we show that if player $b$ knows player $a$'s strategy,  then $b$ can obtain new information from observing $a$'s actions. 
Further,  if $a$ knows that $b$ knows $a$'s strategy,  then $a$ knows that $b$ will obtain new information from observing $a$'s action.  In the situation that we study, this knowledge is a prerequisite for $a$ to safely perform the given action.  This illustrates how higher-order knowledge is necessary for successful gameplay.

Since an information perspective may be infinite, it is possible to express common knowledge of strategies.  We show that if all strategies are common knowledge,  then the individual knowledge operators satisfy the negative introspection property.
Then, we turn to the binary consensus problem~\cite{CachinBook}. 
We show that common knowledge of strategies is required to solve this task successfully.

Finally, we investigate the problem of model checking where we restrict ourselves to formulas that do not contain common knowledge operators.
The key technical lemma shows that, in this case, only information perspectives of bounded size have to be considered.  This then yields the decidability of the corresponding model checking problem.

\section{The model}\label{sec:2}

We start with a countable set of atomic propositions  $\Prop$ and a finite set of agents~$\Ag$.   We use $\pow(X)$ to denote the power set of a set $X$ and $X^Y$ to denote the set of functions from $Y$ to $X$.

\begin{definition}
A \emph{concurrent game structure} is a tuple
$
(\Ac,  \pos, \E, \val, \sim_a)
$
where 
\begin{enumerate}
\item $\Ac$ is a finite set of actions,
\item $\pos$ is a finite set of positions,
\item $\E: \pos \times \Ac^{\Ag} \to \pos$ is a transition function,
\item $\val: \pos \to \pow(\Prop)$ a valuation function,
\item $\sim_a \,\subseteq\, (\pos \times \pos) \cup (\Ac \times \Ac)$ is an equivalence relation for each agent  $a\in \Ag$,  called $a$'s observation relation.
\end{enumerate}
\end{definition}

\begin{definition}
A \emph{joint action} is a function $\alpha: \Ag \to \Ac$ mapping each agent to an action. 
\end{definition}

\begin{definition}
A \emph{history} is a finite sequence of positions and joint actions 
\[
\rho = v_0 \alpha_1 v_1 \ldots \alpha_n v_n
\]
such that $\E(v_i, \alpha_{i+1})=v_{i+1}$. 
For a history $\rho = v_0 \alpha_1 v_1 \ldots \alpha_n v_n$ and $i\leq n$,  we set $\rho_{\leq i} := v_0 \alpha_1 v_1 \ldots \alpha_i v_i$ and $\last(\rho):=v_n$.
We let $\Hist$ be the set of all histories.
\end{definition}

\begin{definition}
Two positions $v,v' \in \pos$ are \emph{indistiguishable} for an agent $a$ if $v \sim_a v'$,  and similarly for actions. 
Two joint actions $\alpha$ and $\beta$ are indistinguishable for agent $a$,  in symbols  $\alpha \sim_a \beta$ if they are pointwise indistinguishable for $a$,  i.e.~for all $b \in \Ag$, we have that $\alpha(b) \sim_a \beta(b)$.
\end{definition}

We assume synchronous perfect recall.  That is, agents remember their observations of all past positions and actions.  Hence, we extend the observation relations to histories as follows. 
\begin{definition}
Two histories $\rho = v_0 \alpha_1 v_1 \ldots \alpha_n v_n$ and $\rho' = v'_0 \alpha'_1 v'_1 \ldots \alpha'_m v_m$ are \emph{indistinguishable} for agent $a$,  in symbols $\rho \sim_a \rho'$,  if $m=n$,   $v_i \sim_a v'_i$ for all $0 \leq i \leq n$,  and  $\alpha_i \sim_a \alpha'_i$ for all $1 \leq i \leq n$.
\end{definition}

\begin{definition}
A \emph{strategy} is a function $\sigma: \Hist \to \Ac$.
We let $\Str$ denote the set of all strategies.
\end{definition}

\begin{definition}
An \emph{assignment} $\chi: \Ag \to \Str$ is a mapping of agents to strategies.
A history $\rho = v_0 \alpha_1 v_1 \ldots \alpha_n v_n$ is \emph{consistent} with an assignment $\chi$ for a set of agents $X$ if
\[
\alpha_{i+1}(b) = \chi(b)(\rho_{\leq i}) \quad \text{for all $i<n$ and agents $b \in X$.}
\]
\end{definition}

\begin{definition}
Given a concurrent game structure $\CGS$,  an assignment $\chi$, and a history $\rho = v_0 \alpha_1 v_1 \ldots \alpha_n v_n$, we define 
the \emph{one-step continuation of $\rho$} 
\[
\X_\CGS^\chi\rho := v_0 \alpha_1 v_1 \ldots \alpha_n v_n \alpha_{n+1} v_{n+1}
\]
 where  the joint action $\alpha_{n+1}$ is given by $\alpha_{n+1}(b):=\chi(b)(\rho)$ for all agents $b$ and $v_{n+1}:=\E(v_n,\alpha_{n+1})$ is the next position.
\end{definition}

Let $X$ be a finite set.
We use $X^{*}$ to denote the set of all finite words with symbols from~$X$
without adjacent repetitions,  that is if $x_n\cdots x_1 \in X^{*}$, then $x_i \neq x_{i+1}$ for all $1 \leq i < n$.

For a word $w = x_n\cdots x_1 \in X^{*}$ and an element \mbox{$y \in X$}, we write $yw$ for the word $y x_n\cdots x_1$.
The  \emph{length of a word~$w$},  $\len(w)$, is the number of symbols of~$w$.
That is,  for a word $w = x_n\cdots x_1 \in X^{*}$, we have $\len(w)=n$.
Further,  we set  
\[
X^{\geq n} := \{ w \in X^{*} \ |\ \len(w) \geq n\},
\]
that is $X^{\geq n}$ is the set of those words of   $X^{*}$ with a length of at least~$n$. 

An information perspective stores which agents are informed about the strategies of which other agents.
\begin{definition}
An \emph{information perspective} $\I$ is  a subset of $\Ag^{\geq 2}$.
Further, for an agent~$a$,  we set 
\[
\I_a :=\{ b \in \Ag \ |\ ab \in \I \} \cup \{ a \}
\]
and
\[
\I[a] := \{w \in \Ag^{\geq 2} \ |\ aw \in \I\} \ \cup
	\{ w \in \I \ |\ w= aw'  \text{ for some $w' \in \Ag^*$}\} .
\]
\end{definition}
Let $a,b,c \in \Ag$ be agents.   We interpret $ab \in \I$ as \emph{agent~$a$ knows (is informed about) agent $b$'s strategy}.  Hence, $\I_a$ is the set of all agents whose strategies agent $a$ is informed about.  
We assume that each agent knows its own strategy.   Therefore, we let $a \in \I_a$ by definition.

Further,  $abc \in \I$ means that agent $a$ knows that agent~$b$ knows agent $c$'s strategy,  and so on.
The operation $\I[a]$ changes the information perspective to agent $a$'s point of view.
That is, $\I[a]$ contains the information that $a$ knows.
We assume positive introspection, that is,  agents know about their own knowledge.
This is reflected by the condition that for all $w$ with $w= aw'$ for some $w'$,  we have
$w \in \I$ implies $w \in \I[a]$.


\begin{definition}
An information perspective $\I$ is \emph{truthful} if it is closed under suffixes,
i.e.~for any $a \in \Ag$ and any $w \in \Ag^{\geq 2}$ we have that 
\begin{equation}\label{eq:clsuf:1}
aw \in \I \text{ implies } w \in \I.
\end{equation}
\end{definition}
This means,  for instance,   if agent $a$ knows that agent~$b$ knows agent~$c$'s strategy,  then agent~$b$ must know agent~$c$'s strategy. 
Formally,    \eqref{eq:clsuf:1} guarantees that $abc \in \I$ implies \mbox{$bc \in \I$}.

On a technical level,  an  important consequence of \eqref{eq:clsuf:1} is that for any  truthful  information perspective $\I$ and any agent~$a$, we have
\begin{equation}\label{eq:truthful:1}
\I[a] \subseteq \I.
\end{equation}
This will be essential when we establish a version of the knowledge axiom T, see Lemma~\ref{l:refl:1}.

An agent $a$ cannot distinguish two assignments $\chi$ and $\chi'$ given an information perspective~$\I$ if $\chi$ and $\chi'$ agree on the strategies of all agents that agent~$a$ is informed about according to $\I$.
As mentioned before,  this includes the agent's own strategy.
\begin{definition}\label{d:chi_eq:1}
We say that two assignments $\chi$ and $\chi'$ are \emph{indistiguishable} for agent $a$ under the information perspective $\I$,  in symbols $\chi \sim^\I_a \chi'$,  if we have $\chi(b) = \chi'(b)$ for all $b \in \I_a$.
\end{definition}

\begin{definition}
A \emph{state} in a concurrent game structure is a triple $(\chi,\I,\rho)$ where
\begin{enumerate}
\item $\chi$ is an assignment, 
\item $\I$ is an information perspective,  and
\item $\rho$ is a history.
\end{enumerate}
Let $a$ be an agent. We say that $(\chi,\I,\rho)$ is \emph{$a$-consistent} if $\rho$ is consistent with~$\chi$ under $\I_a$.

Further, we call a state $(\chi',\I', \rho')$  \emph{accessible from  $(\chi,\I, \rho)$ for $a$},  in symbols 
$
(\chi,\I, \rho)  \trianglelefteq_a (\chi',\I',\rho'),
$
 if
\begin{enumerate}
\item $\chi \sim^\I_a \chi'$,
\item $\I[a] \subseteq \I'$,  
\item $\rho \sim_a \rho'$,
\item $(\chi,\I,\rho)$ is $a$-consistent  and
\item $(\chi',\I',\rho')$ is $a$-consistent.
\end{enumerate}
\end{definition}

In order to deal with common knowledge of a group of agents,  we need the following definitions where $G\subseteq \Ag$:
\begin{align*}
Z  \trianglelefteq_G Z' &\text{ if{f} } Z  \trianglelefteq_a Z'  \text{ for some $a\in G$},\\
Z  \trianglelefteq^{*}_G Z' &\text{ if{f} } Z  \trianglelefteq_G Z_1   \trianglelefteq_G \cdots  \trianglelefteq_G Z_n \trianglelefteq_G Z' \quad\text{for some states $Z_1,\ldots,Z_n$}.
\end{align*}
Hence $Z  \trianglelefteq^{*}_G Z'$ states that $Z'$ is reachable from $Z$ using agents from $G$.  Its definition includes the case that $Z  \trianglelefteq^{*}_G Z'$ if  $Z  \trianglelefteq_a Z'$ for some $a\in G$.

In the following, we study some basic properties of the accessibility relation.
First, we observe that replacing the history in a state $Z$ with a history that cannot be distinguished by agent $a$ gives a state $Z'$ that is accessible from $Z$ for $a$ (given that $Z$ and $Z'$ are $a$-consistent and the information perspective of $Z$ is truthful).

\begin{lemma}\label{l:consistency:2}
Let $\CGS=(\Ac,  \pos, \E, \val, \sim_a)$ be a concurrent game structure and let $a$ be an agent. 
We let $(\chi,\I,\rho)$ be an $a$-consistent state with a truthful information perspective $\I$.
For each history $\rho'$ such that $\rho \sim_a \rho'$ and $\rho'$ is consistent with $\chi$ under $\I_a$,  we have
\[
(\chi,\I,\rho)   \trianglelefteq_a  (\chi,\I,\rho').  
\]
\end{lemma}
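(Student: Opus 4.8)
The plan is to unfold the definition of the accessibility relation $\trianglelefteq_a$ and simply verify each of its five defining conditions for the pair of states $(\chi,\I,\rho)$ and $(\chi,\I,\rho')$. The key structural observation is that both states carry the \emph{same} assignment $\chi$ and the \emph{same} information perspective $\I$; only the history differs. Because of this, three of the five conditions collapse to reflexivity or are handed to us directly by the hypotheses, and the whole argument is essentially bookkeeping.

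Concretely, I would proceed condition by condition. For condition~1, I must show $\chi \sim^\I_a \chi$; by Definition~\ref{d:chi_eq:1} this amounts to $\chi(b)=\chi(b)$ for all $b\in\I_a$, which holds trivially. For condition~2, I need $\I[a]\subseteq\I$; here I invoke that $\I$ is truthful and apply \eqref{eq:truthful:1} directly. Condition~3 is $\rho\sim_a\rho'$, which is one of the hypotheses on $\rho'$. Condition~4 asserts that $(\chi,\I,\rho)$ is $a$-consistent, which is assumed in the statement of the lemma. Finally, condition~5 requires $(\chi,\I,\rho')$ to be $a$-consistent, i.e.\ that $\rho'$ is consistent with $\chi$ under $\I_a$; this is precisely the remaining hypothesis on $\rho'$.

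Since all five conditions are verified, the definition of $\trianglelefteq_a$ yields $(\chi,\I,\rho)\trianglelefteq_a(\chi,\I,\rho')$, completing the argument. The only step that carries any genuine content is condition~2: it is the lone place where the truthfulness assumption is actually used, precisely through the inclusion $\I[a]\subseteq\I$ recorded in \eqref{eq:truthful:1}. Everything else reduces to reflexivity of equality on assignments and to restating the hypotheses, so I do not anticipate a real obstacle beyond being careful to cite truthfulness where it is needed.
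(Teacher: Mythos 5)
Your proof is correct and takes exactly the same approach as the paper, whose one-line proof (``this follows immediately from $\chi \sim^\I_a \chi$ and \eqref{eq:truthful:1}'') compresses precisely the five-condition check you carry out: conditions 3--5 are the hypotheses, condition 1 is reflexivity, and condition 2 is the truthfulness inclusion $\I[a]\subseteq\I$. Your version just makes the bookkeeping explicit.
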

\begin{proof}
This follows immediately from $\chi \sim^\I_a \chi$ and \eqref{eq:truthful:1}.
\end{proof}

The next lemma states that if $Z  \trianglelefteq_a Z'$,  then the history of the state 
$Z'$ is consistent from the point of view of agent~$a$ in $Z$.   This is related to the notation $\rho \sim^{\chi,\I}_a \rho'$ used  in~\cite{ijcai2021p246}.
\begin{lemma}\label{l:consistency:1}
Let $\CGS=(\Ac,  \pos, \E, \val, \sim_a)$ be a concurrent game structure and let
$Z=(\chi,\I,\rho)$ and $Z'=(\chi',\I',\rho')$ be states with $Z  \trianglelefteq_a Z'$ for some agent $a$.
Then $\rho'$ is consistent with $\chi$ under $\I_a$.
\end{lemma}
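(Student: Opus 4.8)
The goal is to show that $\rho' = v'_0 \alpha'_1 v'_1 \ldots \alpha'_m v'_m$ satisfies $\alpha'_{i+1}(b) = \chi(b)(\rho'_{\leq i})$ for all $i < m$ and all $b \in \I_a$. The plan is to derive this from two of the clauses guaranteed by $Z \trianglelefteq_a Z'$: clause~5, which says $\rho'$ is $a$-consistent, i.e.\ $\alpha'_{i+1}(b) = \chi'(b)(\rho'_{\leq i})$ for all $i<m$ and $b \in \I'_a$; and clause~1, which says $\chi \sim^\I_a \chi'$, i.e.\ $\chi(b) = \chi'(b)$ for every $b \in \I_a$. If I can show that every agent $b \in \I_a$ also lies in $\I'_a$, then for such $b$ clause~5 yields $\alpha'_{i+1}(b) = \chi'(b)(\rho'_{\leq i})$, and clause~1 lets me replace $\chi'(b)$ by $\chi(b)$, giving exactly the required equation.

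So the heart of the argument is the inclusion $\I_a \subseteq \I'_a$, which I would isolate as the main step. Here I use clause~2, namely $\I[a] \subseteq \I'$. Fix $b \in \I_a$. If $b = a$, then $b \in \I'_a$ by the convention $a \in \I'_a$. Otherwise $ab \in \I$ with $a \neq b$, so $ab$ is a legitimate word of $\Ag^{\geq 2}$ beginning with $a$; hence it falls into the second component $\{w \in \I : w = aw'\}$ of $\I[a]$ (this is exactly the positive-introspection part of the definition). Therefore $ab \in \I[a] \subseteq \I'$, which means $b \in \I'_a$.

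The only subtlety --- and the step I would check most carefully --- is this use of the definition of $\I[a]$: one must notice that membership of $b$ in $\I_a$ is witnessed through the positive-introspection clause of $\I[a]$ rather than through the first clause $\{w : aw \in \I\}$ (which concerns strictly longer chains). Once $\I_a \subseteq \I'_a$ is established, the rest is the routine substitution described above, and no appeal to the history-indistinguishability clause~3 or to the $a$-consistency of $Z$ (clause~4) is needed. I would therefore present the inclusion first as a short displayed claim and then conclude.
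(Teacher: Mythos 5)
Your proof is correct and takes essentially the same route as the paper's: $a$-consistency of $Z'$ together with $\I[a] \subseteq \I'$ gives the consistency equations for all $b \in \I_a$, and then $\chi \sim^\I_a \chi'$ replaces $\chi'(b)$ by $\chi(b)$. The only difference is that you make explicit the inclusion $\I_a \subseteq \I'_a$ --- correctly noting that for $b \neq a$ the membership $ab \in \I[a]$ is witnessed by the positive-introspection clause of the definition of $\I[a]$ --- a step the paper's proof uses implicitly.
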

\begin{proof}
We have 
\[
\alpha'_{i+1}(b) = \chi'(b)(\rho'_{\leq i}) \quad \text{for all $i<n$ and agents $b \in  \I'_a$.}
\]
By $\I[a] \subseteq \I'$ we get
\[
\alpha'_{i+1}(b) = \chi'(b)(\rho'_{\leq i}) \quad \text{for all $i<n$ and agents $b \in  \I_a$.}
\]
Because of $\chi \sim^\I_a \chi'$,  we conclude 
\[
\alpha'_{i+1}(b) = \chi(b)(\rho'_{\leq i}) \quad \text{for all $i<n$ and agents $b \in  \I_a$,}
\]
which means that $\rho'$ is consistent with $\chi$ under $\I_a$.
\end{proof}


We now introduce our languages $\lang$  and $\langCK$,  which  includes formulas:
\begin{enumerate}
\item
$\K_a \phi$ meaning \emph{agent $a$ knows that $\phi$};
\item 
$\CK \phi$ meaning \emph{it is common knowledge among $G$ that $\phi$};
\item 
$\X \phi$ meaning \emph{at the next point in time $\phi$ holds}.
\end{enumerate}

\begin{definition}
Formulas of the language $\lang$ are given by the following grammar
\[
\phi ::= p \ |\ \bot \ |\ \phi \to \phi \ |\ \K_a \phi \ |\  \X \phi ,
\]
and formulas of the language $\langCK$ are given by 
\[
\phi ::= p \ |\ \bot \ |\ \phi \to \phi \ |\ \K_a \phi \ |\ \CK \phi \ |\  \X \phi ,
\]
where $p \in \Prop$,  $a \in \Ag$,  and $G \subseteq \Ag$.
The remaining Boolean connectives are defined as usual.
\end{definition}


The next-operator is the only temporal operator that we include in our language, as it is the only one needed for our examples.
It is straightforward to include further temporal operators, see, e.g.,~\cite{ijcai2021p246} for the case of the until-operator.

We define truth of a formula at a state in a concurrent game structure as follows.
\begin{definition}\label{def:truth:1}
Let $\CGS=(\Ac,  \pos, \E, \val, \sim_a)$ be a concurrent game structure and 
$Z=(\chi,\I,\rho)$ be a state.
\emph{Truth} of a formula $\phi$ in $\CGS$  at state $Z$,  in  symbols $\CGS,Z \Vdash \phi$, is inductively defined as follows:
\begin{enumerate}
\item $\CGS, Z \Vdash p$ if{f} $p \in \ell(\last(\rho))$  where $p \in \Prop$,
\item $\CGS, Z \nVdash \bot$,
\item $\CGS, Z \Vdash \phi \to \psi$  if{f}  $\CGS, Z \nVdash \phi$ or  $\CGS, Z \Vdash  \psi$, 
\item $\CGS, Z \Vdash \K_a \phi $  if{f}  $\CGS, Z' \Vdash \phi$  for all states $Z'$ with $Z \trianglelefteq_a Z'$,
\item $\CGS, Z \Vdash \CK \phi $  if{f}  $\CGS, Z' \Vdash \phi$  for all states $Z'$ with $Z \trianglelefteq^{*}_G Z'$,
\item $\CGS, Z \Vdash \X \phi $  if{f}  $\CGS, Z' \Vdash \phi$ where 
				$Z'=(\chi,\I,\X_\CGS^\chi\rho)$.
\end{enumerate}

We say that a formula $\phi$ is \emph{valid} if $\CGS,Z \Vdash \phi$ holds for all concurrent game structures $\CGS$ and states $Z$.
\end{definition}


\section{First examples}\label{s:firstExamples}

We give two basic examples to illustrate our truth definition.  These examples will also be referred to later in the proof of Lemma~\ref{l:negIntro:1}.

We consider a set $\Ag$ of two agents $a$ and $b$  and 
the concurrent game structure shown in Figure~\ref{fig:CGS1}.
In position $v_1$, only agent $a$ will perform an action. 
This can can be either action $\alpha$ leading to position $v_2$ where $p$ is true or 
action $\beta$ leading to position $v_3$ where $q$ is true.
We assume that every agent can distinguish the different positions and also the two actions.
Since the other agents' actions do not matter in these examples,  we will not mention them and consider $\alpha$ and $\beta$ also as joint actions. 
We also omit the loops on final positions.
\begin{figure}[ht]
\begin{center}
\begin{tikzpicture}[scale=0.45]
\node[circle, draw] at (0,0) (v1) [label = below:$ $]{$v_1$};
\node[circle, draw] at (-3,-4) (v2) [label = below:$p$]{$v_2$};
\node[circle, draw] at (3,-4) (v3) [label = below:$q$]{$v_3$};

\draw[-] (v1) to (v2);
\draw[-] (v1) to (v3);

\node[circle, fill = white] at (-2.5, -1.5) (l1)[label = above:$ $]{$\alpha$};
\node[circle, fill = white] at (2.5, -1.5) (l1)[label = below:$ $]{$\beta$};

\end{tikzpicture}
\caption{The concurrent game structure $\CGS_1$}\label{fig:CGS1}
\end{center}
\end{figure}
Let the strategy $\sigma_\alpha$ be such that $\sigma_\alpha(v_1) = \alpha$ and let~$\chi_\alpha$ be an assignment with $\chi_\alpha(a)=\sigma_\alpha$.
Similarly,  let  the strategy $\sigma_\beta$ be such that $\sigma_\beta(v_1) = \beta$ and let~$\chi_\beta$ be an assignment with $\chi(a)=\sigma_\beta$.

%
%


For the first example,  we assume that agent $b$ is informed about agent $a$'s strategy. 
\begin{example}\label{ex:basic:1}
We set $\I := \{ ba \}$ and consider the state $Z=(\chi_\alpha,\I,v_1)$.
In this setting,  we find that $\CGS_1, Z \Vdash \K_b \X p$ holds.

Indeed,  let $Z'=(\chi',\I',v_1)$ be an arbitrary state with $Z \trianglelefteq_b Z'$.
This implies $\chi_\alpha \sim^{\I}_b \chi'$, which by  $a \in \I_b$  yields  $\chi_\alpha(a) = \chi'(a)$.
Therefore,  we get  $\X_{\CGS_1}^{\chi'} v_1 = v_1 \alpha v_2$ and thus
$\CGS_1, Z' \Vdash \X p$.
Since~$Z'$ was arbitrary with $Z \trianglelefteq_b Z'$,  we conclude 
$\CGS_1, Z \Vdash \K_b \X p$.
\end{example}
For the second example, we consider the case when $b$ is not informed about $a$'s strategy. 
%
\begin{example}\label{ex:basic:2}
We set $\I=\emptyset$,  that is, in particular, $ba \notin \I$,
and consider the state $Z=(\chi_\alpha,\I,v_1)$.
We show that 
\begin{equation}\label{eq:ex:basic:2}
\CGS_1, Z \nVdash \K_b \X p.
\end{equation}
Indeed,  we find $\chi_\alpha \sim^\I_b \chi_\beta$.
Hence $Z \trianglelefteq_b Z'$ for $Z'=(\chi_\beta,\I,v_1)$.
We get $\CGS_1, Z' \Vdash \X \lnot p$ and thus conclude~\eqref{eq:ex:basic:2}.
\end{example}

Similarly, we can also establish $\CGS_1, Z \nVdash \K_b \X q$.
However, let us mention that, as usual, the concurrent game structure is general background knowledge.  That means agent $b$ knows that agent $a$ will perform either action $\alpha$ or action $\beta$ and thus reach a position where $p$ holds or one where $q$ holds.  Formally, we have $\CGS_1, Z \Vdash \K_b (\X p \lor \X q)$.

\section{Properties of knowledge}

In this section,  we examine basic properties of the individual knowledge operators.
We begin with positive introspection.

\begin{lemma}\label{l:ax4:1}
The formula $\K_a \phi \to  \K_a \K_a \phi$ is valid.
\end{lemma}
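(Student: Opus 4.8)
The plan is to prove validity of $\K_a \phi \to \K_a \K_a \phi$ by unfolding the truth definition of the knowledge operator and reducing everything to a transitivity property of the accessibility relation $\trianglelefteq_a$. Concretely, I would fix an arbitrary concurrent game structure $\CGS$ and state $Z = (\chi,\I,\rho)$, assume $\CGS, Z \Vdash \K_a \phi$, and aim to show $\CGS, Z \Vdash \K_a \K_a \phi$. By clause~(4) of Definition~\ref{def:truth:1}, the latter unfolds to: for every $Z'$ with $Z \trianglelefteq_a Z'$ and every $Z''$ with $Z' \trianglelefteq_a Z''$, we have $\CGS, Z'' \Vdash \phi$. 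Since the hypothesis gives $\CGS, Z'' \Vdash \phi$ for all $Z''$ with $Z \trianglelefteq_a Z''$, it suffices to show that $Z \trianglelefteq_a Z'$ and $Z' \trianglelefteq_a Z''$ together imply $Z \trianglelefteq_a Z''$. So the whole lemma reduces to establishing transitivity of $\trianglelefteq_a$.

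To prove transitivity I would write $Z = (\chi,\I,\rho)$, $Z' = (\chi',\I',\rho')$, $Z'' = (\chi'',\I'',\rho'')$ and verify the five conditions in the definition of $Z \trianglelefteq_a Z''$ one by one. Conditions~(4) and~(5), the $a$-consistency of $Z$ and $Z''$, come for free: $a$-consistency of $Z$ is already part of $Z \trianglelefteq_a Z'$, and $a$-consistency of $Z''$ is already part of $Z' \trianglelefteq_a Z''$. Condition~(3), $\rho \sim_a \rho''$, follows from $\rho \sim_a \rho'$ and $\rho' \sim_a \rho''$ because $\sim_a$ on histories is an equivalence relation (it is defined componentwise from the equivalence relations $\sim_a$ on positions and actions), hence transitive. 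Condition~(1), $\chi \sim^\I_a \chi''$, says $\chi(b) = \chi''(b)$ for all $b \in \I_a$; from $\chi \sim^\I_a \chi'$ we get $\chi(b)=\chi'(b)$ on $\I_a$, and from $\chi' \sim^{\I'}_a \chi''$ we get $\chi'(b)=\chi''(b)$ on $\I'_a$, so chaining these gives the result provided $\I_a \subseteq \I'_a$.

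The one condition requiring genuine care — and the main obstacle — is keeping track of how the information perspectives nest, i.e.\ condition~(2), which demands $\I[a] \subseteq \I''$. From the two hypotheses I have $\I[a] \subseteq \I'$ and $\I'[a] \subseteq \I''$. To conclude $\I[a] \subseteq \I''$ I would first argue that $\I[a] \subseteq \I'[a]$: every element of $\I[a]$ has the form $aw'$ (by the definition of $\I[a]$, each of its members is either $aw$ with $aaw \in \I$, which is impossible since $\Ag^*$ forbids adjacent repetitions, or an element of $\I$ beginning with $a$), and since $\I[a]\subseteq \I'$, such an element lies in $\I'$ and begins with $a$, hence belongs to $\I'[a]$ by the second defining clause of the bracket operation. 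Combining $\I[a]\subseteq \I'[a] \subseteq \I''$ settles condition~(2). The same nesting argument, restricted to length-two words, yields the inclusion $\I_a \subseteq \I'_a$ needed above for condition~(1): if $b \in \I_a$ with $b \neq a$ then $ab \in \I[a]\subseteq \I'[a] \subseteq \I'$, so $ab \in \I'$, i.e.\ $b \in \I'_a$, and $a \in \I'_a$ holds by definition. I expect the self-referential, notation-heavy definition of $\I[a]$ to be the only place where the argument is delicate; once the inclusion $\I[a]\subseteq\I'[a]$ is cleanly isolated, the rest is routine bookkeeping.
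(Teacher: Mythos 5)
Your global strategy coincides with the paper's proof: reduce the lemma to transitivity of $\trianglelefteq_a$ and verify the five clauses of the accessibility relation, with clauses (3)--(5) and the chaining argument for clause (1) handled exactly as in the paper. The genuine gap is the structural claim you lean on for clause (2), namely that ``every element of $\I[a]$ has the form $aw'$''. This is false, and in fact backwards. The first defining clause of $\I[a]$ contributes the word $w$ itself whenever $aw \in \I$ (not a word of the form $aw$), and such a $w$ can \emph{never} begin with $a$: if it did, then $aw \in \I \subseteq \Ag^{\geq 2}$ would contain the adjacent repetition $aa$, which $\Ag^{*}$ forbids. Concretely, for distinct agents $a,b,c$ and $\I = \{abc\}$ one computes $\I[a] = \{bc,\ abc\}$, and $bc$ does not begin with $a$. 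These clause-one elements are the typical members of $\I[a]$ --- they are precisely the content of what $a$ knows --- and your justification of $\I[a] \subseteq \I'[a]$ ``by the second defining clause'' of $\I'[a]$ fails for all of them: from $bc \in \I'$ the second clause yields nothing, since $bc$ does not start with $a$.

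The inclusion $\I[a] \subseteq \I'[a]$ you isolated is nonetheless true, and the repair is short: if $w \in \I[a]$ because $aw \in \I$, then $aw$ itself begins with $a$, so $aw \in \I[a]$ by the \emph{second} clause (this is the positive-introspection closure noted after the definition); hence $aw \in \I'$ by $\I[a] \subseteq \I'$, and then $w \in \I'[a]$ by the \emph{first} clause of $\I'[a]$. Elements arising from the second clause go through as you wrote. With this, $\I[a] \subseteq \I'[a] \subseteq \I''$ settles clause (2), and restricting to length-two words gives $\I_a \subseteq \I'_a$ for clause (1). One further slip to remove: your chain $ab \in \I[a] \subseteq \I'[a] \subseteq \I'$ uses the inclusion $\I'[a] \subseteq \I'$, which by~\eqref{eq:truthful:1} is guaranteed only for \emph{truthful} perspectives and is not available here (the lemma assumes nothing about $\I'$); you do not need it, since $ab \in \I[a] \subseteq \I'$ directly. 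The paper asserts the two inclusions $\I_a \subseteq \I'_a$ and $\I[a] \subseteq \I''$ without detailed proof, so your decomposition through $\I'[a]$ is a reasonable way to make them explicit --- but only once the case analysis on the two clauses of $\I[a]$ is carried out in the correct direction.
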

\begin{proof}
Let $Z = (\chi,\I,\rho)$. Assume we have two states $Z' = (\chi',\I',\rho')$ and 
$Z'' = (\chi'',\I'',\rho'')$ such that 
$Z \trianglelefteq_a Z'$ and $Z' \trianglelefteq_a Z''$.
We show $Z \trianglelefteq_a Z''$.

Because of $\I[a] \subseteq \I'$ and $ \I'[a] \subseteq \I''$, we obtain
\begin{equation}\label{eq:subI:1}
\I_a \subseteq \I'_a
\end{equation}
and 
\begin{equation}\label{eq:subI:2}
\I[a] \subseteq \I''.
\end{equation}

Next we show that
\begin{equation}\label{eq:chi:1}
\chi \sim^\I_a \chi''.
\end{equation}
We have $\chi(b) = \chi'(b)$ for all $b \in \I_a$ and
$\chi'(b) = \chi''(b)$ for all $b \in \I'_a$.
Using~\eqref{eq:subI:1} we get $\chi(b) = \chi''(b)$ for all $b \in \I_a$ 
and \eqref{eq:chi:1} is established.

Since $\sim_a$ is an equivalence relation on histories,  we immediately get 
\begin{equation}\label{eq:simeq:1}
\rho \sim_a \rho''.
\end{equation}

From $Z \trianglelefteq_a Z'$,  we obtain that $Z$ is $a$-consistent and $Z' \trianglelefteq_a Z''$ yields that $Z''$ is $a$-consistent.
%
This,  together with~\eqref{eq:subI:2},  \eqref{eq:chi:1},  and \eqref{eq:simeq:1} yields  $Z \trianglelefteq_a Z''$.

Thus,  $\trianglelefteq_a$ is a transitive relation and we  conclude that  $\CGS,Z \Vdash \K_a \phi$  implies $\CGS,Z'' \Vdash \phi $.
\end{proof}

We defined truthful information perspectives to satisfy~\eqref{eq:clsuf:1}.
This is a necessary condition for $Z \trianglelefteq_a Z$ to hold. 
However, it is not sufficient as the state $Z$ may not be $a$-consistent.

We have the following lemma,  which states that the Truth axiom for $\K_a$ holds at $a$-consistent states with truthful information perspectives.
This is similar to the situation in simplicial semantics~\cite{CACHIN2025114902,GoubaultKLR2024faulty},  which provides a model of epistemic logic for distributed systems.  There, the Truth axiom only holds for correct processes. Hence,  it also has the form of~\eqref{eq:axt:1}.

\begin{lemma}\label{l:refl:1}
Let $\CGS$ be a concurrent game structure and $Z = (\chi,\I,\rho)$ be a state with a truthful information perspective.
We have
\begin{equation}\label{eq:axt:1}
\CGS,Z \Vdash \lnot K_a \bot \to ( \K_a \phi \to  \phi).
\end{equation}
\end{lemma}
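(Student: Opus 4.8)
The plan is to assume the two antecedents at $Z$ and derive $\phi$ at $Z$ by showing that, under these hypotheses, the accessibility relation is reflexive at $Z$, i.e.\ $Z \trianglelefteq_a Z$. Once reflexivity is in hand, the truth clause for $\K_a$ immediately gives the conclusion: if $\CGS, Z \Vdash \K_a \phi$ and $Z \trianglelefteq_a Z$, then instantiating that clause with $Z' := Z$ yields $\CGS, Z \Vdash \phi$.

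First I would unpack the premise $\lnot \K_a \bot$. Since $\CGS, Z' \nVdash \bot$ for every state $Z'$, the formula $\K_a \bot$ can hold at $Z$ only vacuously, that is, only when there is no $Z'$ with $Z \trianglelefteq_a Z'$. Hence $\CGS, Z \Vdash \lnot \K_a \bot$ is equivalent to the existence of at least one state $Z'$ with $Z \trianglelefteq_a Z'$. Fixing such a $Z'$, condition~4 in the definition of $\trianglelefteq_a$ tells us that $Z = (\chi,\I,\rho)$ is $a$-consistent. This is exactly the ingredient that the remark preceding the lemma flagged as the missing piece: truthfulness alone does not force $Z \trianglelefteq_a Z$, because $Z$ need not be $a$-consistent, and it is precisely $\lnot \K_a \bot$ that supplies $a$-consistency.

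It then remains to verify the five clauses of $Z \trianglelefteq_a Z$. Clauses~1 and~3 hold trivially: $\chi \sim^\I_a \chi$ because equality is reflexive, and $\rho \sim_a \rho$ because $\sim_a$ is an equivalence relation. Clause~2, namely $\I[a] \subseteq \I$, is exactly \eqref{eq:truthful:1}, which holds because $\I$ is truthful. Finally, clauses~4 and~5 both require $a$-consistency of $Z$, which we obtained in the previous step. This establishes $Z \trianglelefteq_a Z$ and, with it, the lemma.

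The only genuinely load-bearing observation --- and thus the \emph{hard part}, though it is conceptually light --- is recognizing that $\lnot \K_a \bot$ is the semantic encoding of the $a$-consistency of the current state. Everything else is a routine check against the definition of $\trianglelefteq_a$, using truthfulness for clause~2 and reflexivity of the underlying equivalence relations for the rest.
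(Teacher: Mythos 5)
Your proof is correct and follows essentially the same route as the paper's: extract $a$-consistency of $Z$ from the existence of an accessible state guaranteed by $\lnot \K_a \bot$, then verify the remaining clauses of $Z \trianglelefteq_a Z$ using reflexivity of $\sim_a$ and $\sim^\I_a$ together with \eqref{eq:truthful:1}. Your additional remark that $\lnot \K_a \bot$ is precisely the semantic encoding of $a$-consistency is a nice articulation of what the paper's proof does implicitly.
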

\begin{proof}
Assume $\CGS,Z \Vdash \lnot K_a \bot$. 
Thus, there exists a state $Z'$ with $Z \trianglelefteq_a Z'$, for otherwise we would have $\CGS,Z \Vdash K_a \bot$.
Now $Z \trianglelefteq_a Z'$ implies that $Z$ is $a$-consistent.
Since $\sim_a$ is an equivalence relation, we find $\rho \sim_a \rho$. 
Trivially, we also have $\chi \sim^{\I}_a \chi$.
By~\eqref{eq:truthful:1} we know $\I[a] \subseteq \I$. Together, this yields  $Z \trianglelefteq_a Z$. We conclude $\CGS,Z \Vdash  \K_a \phi \to  \phi$ as desired.
\end{proof}

In general,  agents cannot perform negative introspection. 

\begin{lemma}\label{l:negIntro:1}
The schema of negative introspection is not valid,  i.e.~there is a countermodel to
\[
\lnot \K_b \phi \to \K_b \lnot \K_b \phi.
\]
\end{lemma}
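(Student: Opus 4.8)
The plan is to exhibit a concrete countermodel, and the natural candidate is the structure $\CGS_1$ from Section~\ref{s:firstExamples}, since Example~\ref{ex:basic:1} and Example~\ref{ex:basic:2} already isolate exactly the phenomenon we need: whether $b$ knows $a$'s strategy depends on the information perspective, and this is precisely where introspection can fail. Concretely, I would take the formula $\phi := \X p$ and work at the state $Z = (\chi_\alpha, \I, v_1)$ with $\I = \emptyset$ (so $ba \notin \I$), as in Example~\ref{ex:basic:2}. There we already know $\CGS_1, Z \nVdash \K_b \X p$, which gives us $\CGS_1, Z \Vdash \lnot \K_b \X p$, establishing the antecedent.

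The remaining task is to show the consequent fails, i.e.~$\CGS_1, Z \nVdash \K_b \lnot \K_b \X p$. By the truth definition for $\K_b$, this requires producing a witness state $Z'$ with $Z \trianglelefteq_b Z'$ such that $\CGS_1, Z' \Vdash \K_b \X p$. The idea is that $b$, not knowing $a$'s strategy, cannot rule out a world in which $b$ \emph{does} know $a$'s strategy. So I would choose $Z' = (\chi_\alpha, \I', v_1)$ where $\I' := \{ba\}$ — the perspective from Example~\ref{ex:basic:1} in which $b$ is informed of $a$'s strategy. The first thing to check is that $Z \trianglelefteq_b Z'$: we need $\chi_\alpha \sim^\I_b \chi_\alpha$ (trivial, as the assignments coincide), $\I[b] \subseteq \I'$ (with $\I = \emptyset$ this holds vacuously), $v_1 \sim_b v_1$ (reflexivity), and $b$-consistency of both states (immediate since $v_1$ has length one, so there is no $i < n$ condition to verify). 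Then, having set things up so that $Z'$ is essentially the state of Example~\ref{ex:basic:1}, I invoke that example directly to conclude $\CGS_1, Z' \Vdash \K_b \X p$.

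Putting these together: since $Z \trianglelefteq_b Z'$ and $\CGS_1, Z' \Vdash \K_b \X p = \lnot\lnot\K_b\X p$, the state $Z'$ violates $\lnot \K_b \X p$, so $\CGS_1, Z \nVdash \K_b \lnot \K_b \X p$. With the antecedent true and the consequent false, $Z$ refutes the instance $\lnot \K_b \X p \to \K_b \lnot \K_b \X p$, which is a countermodel to the schema.

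I do not expect a serious obstacle here; the main subtlety is purely bookkeeping about the accessibility relation $\trianglelefteq_b$. The one point that genuinely needs care is the asymmetry in the two information perspectives: we move from $\I = \emptyset$ up to $\I' = \{ba\}$, and this is licensed precisely because the condition $\I[b] \subseteq \I'$ only constrains $\I'$ to \emph{contain} what $b$ already knows, not to agree with it exactly. I would make sure to emphasize this, as it is the conceptual heart of why negative introspection fails: enlarging an agent's (modeled) knowledge of strategies is always accessible, so an agent who is ignorant of $a$'s strategy cannot introspectively certify that ignorance.
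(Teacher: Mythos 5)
Your proposal is correct and is essentially the paper's own proof: the paper likewise takes $\phi = \X p$, the state $(\chi_\alpha,\emptyset,v_1)$ refuting the antecedent via Example~\ref{ex:basic:2}, and the accessible state $(\chi_\alpha,\{ba\},v_1)$ satisfying $\K_b \X p$ via Example~\ref{ex:basic:1}. The only difference is that you explicitly verify the conditions for $\trianglelefteq_b$ (including the vacuous $b$-consistency of the length-one history and $\emptyset[b] \subseteq \{ba\}$), which the paper leaves implicit.
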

\begin{proof}
We consider again the setting from Section~\ref{s:firstExamples} and let $\phi$  be the formula~$\X p$. 
Further, we let $X=(\chi_a,\emptyset, v_1)$, and $Z=(\chi_a,\I, v_1)$ with $\I:=\{ba\}$.
As in Example~\ref{ex:basic:2}, we have $\CGS_1, X \Vdash \lnot \K_b \phi$.
On the other hand,   by Example~\ref{ex:basic:1}, we get  $\CGS_1, Z \Vdash  \K_b \phi$.
Because of $X \trianglelefteq_b Z$,  we conclude  $\CGS_1, X \nVdash \K_b \lnot \K_b \X p$.
\end{proof}

\section{Hanabi}

We illustrate the workings of our logic and the importance of higher-order knowledge of strategies by studying the game Hanabi.
Hanabi is a cooperative game of imperfect information.  
Because of this particular combination, Hanabi has been proposed as a  challenge for machine learning techniques in multi-agent settings~\cite{HanabiChallenge}.  Early work on Hanabi was mainly concerned with evaluating fixed strategies~\cite{FireworksDisplay}.  Later,  Hanabi has been used to study, e.g.,  zero-shot coordination,  that is, constructing agents that can coordinate with novel partners they have not seen before~\cite{OtherPlay}.
Most recently,  Perrotin~\cite{Perrotin2025} provides an epistemic logic analysis of Hanabi that is based on (common) knowledge of game states.  In contrast to our examples,  knowledge of strategies is not relevant for that work.

The goal of the game is to play cards so as to form ordered stacks,
one for each color,  beginning with a card of rank 1 and ending with a card of rank 5. 
The game's twist is the imperfect information, which arises from each player being unable to see their own cards (but they can see the hands of the other players).  Hence,  players have to rely on receiving information from the other players.

Players take turns doing one of three actions: giving a hint,  playing a card from their hand,  or discarding a card. 
The number and types of hints that are allowed are very restricted.   However, each action taken in a play is observed by all players and can, therefore, convey implicit information. This is possible, in particular, when players have pre-established conventions and tactics.
In our example, we assume that the players use the following convention: if a player thinks that several of her cards are possible to play, she plays the leftmost of them.

We use Hanabi as a prime example of the role of higher-order knowledge when reasoning about strategies.  In particular, we investigate a standard tactic of Hanabi, the so-called \emph{finesse} move,  see~\cite{HanabiChallenge,Conventions} for more details.

We consider a simplified situation,  ignoring the color of the cards and the exact form of the hints.
There are three players $a$, $b$,  and $c$.  It is player $a$'s turn,  then player $b$'s,  and then player $c$'s.
Player $a$ sees that player $b$ has a card of rank~1 on her leftmost position and player $c$ has a card of rank $2$.
Player $a$ could now
\begin{enumerate}
\item inform player $b$ of her card of rank 1.  Then player $b$ will play that card.
\item inform player $c$ of her card of rank 2 and assume that player $b$ will infer that she must have a card 1. 
So player~$b$ will play her card 1, and then player $c$ will play her card~2.
\end{enumerate}
Of course, the second option is the better choice, as with one hint, two correct cards will be played.  
This is the essence of the \emph{finesse}-move.

We model this situation with the concurrent game structure $\CGS_2$ given in Figure~\ref{fig:hanabi:1}.
\begin{figure}[ht]
\begin{center}
\begin{tikzpicture}[scale=0.45]
\node[circle, draw] at (0,0) (v1) [label = below:$p$]{$v_1$};
\node[circle, draw] at (-3,-4) (v2) [label = below:$p$]{$v_2$};
\node[circle, draw] at (3,-4) (v3) [label = below:$p$]{$v_3$};

\node[circle, draw] at (11,0) (v4) [label = below:$\lnot p$]{$v_4$};
\node[circle, draw] at (8,-4) (v5) [label = below:$\lnot p$]{$v_5$};
\node[circle, draw] at (14,-4) (v6) [label = below:$\lnot p$]{$v_6$};

\draw[-] (v1) to (v2);
\draw[-] (v1) to (v3);
\draw[-] (v4) to (v5);
\draw[-] (v4) to (v6);

\node[circle, fill = white] at (-2.5, -1.5) (l1)[label = above:$ $]{$\alpha$};
\node[circle, fill = white] at (2.5, -1.5) (l1)[label = below:$ $]{$\beta$};
\node[circle, fill = white] at (8.5, -1.5) (l1)[label = above:$ $]{$\gamma$};
\node[circle, fill = white] at (13.5, -1.5) (l1)[label = below:$ $]{$\beta$};

\end{tikzpicture}
\caption{The concurrent game structure $\CGS_2$}\label{fig:hanabi:1}
\end{center}
\end{figure}
Proposition $p$ means that player~$b$ has a card~1.
The positions $v_1$ and $v_4$ are two possible positions before $a$'s move.  At position $v_1$, player $b$ does have a card 1,  while she does not at~$v_4$.  The positions $v_2$, $v_3$,  $v_5$,  and $v_6$ represent the different possible results of $a$'s action.

Action $\alpha$ means that player $a$ gives player $b$ the information that $b$ has a card 1.
Action~$\beta$ means that player $a$ gives player $c$ the information that she has a card 2.
Action $\gamma$ means that player $a$ gives player $b$ the information that $b$ does not have a card 1.
As in our previous examples,  only the action of player $a$ matters, and we consider $\alpha$,  $\beta$, and $\gamma$ also as joint actions.

All actions are public, and each agent can distinguish them.  In particular, we have
\begin{equation}\label{eq:hanabi:1}
\alpha \not\sim_b \beta \text{ and }  \beta \not\sim_b \gamma\text{ and } \alpha \not\sim_b \gamma .
\end{equation}

Player $a$ can see player $b$'s cards.  Thus,  $a$ can distinguish the different states,  i.e.
\begin{equation}\label{eq:paKnows:1}
v_i \not\sim_a v_j \quad\text{for $i \neq j$}.
\end{equation}
Player $b$ does not know whether she has a card 1. Thus 
\begin{equation}\label{eq:paKnows:2}
v_1 \sim_b v_4.  
\end{equation}
If player~$a$ informs player $b$,  then she knows whether she has a card 1. Thus $v_2 \not\sim_b v_5$.  However, if player $a$ gives information to player $c$, then $b$ still does not know whether she has a card 1. Hence 
\begin{equation}\label{eq:paKnows:3}
v_3\sim_b v_6.
\end{equation}
Player $a$'s strategy is as follows.
\begin{enumerate}
\item If player $b$ does not have a card 1, then $a$ gives $b$ the information that~$b$ does not have a card 1,  i.e.~she performs action $\gamma$.
\item If player $b$ has a card 1, then $a$ informs $c$,  i.e.~she performs action $\beta$.
\end{enumerate}
We denote this strategy by $\sigma$. That is $\sigma(v_1):=\beta$ and $\sigma(v_4):=\gamma$.
Further, we let the assignemt $\chi$ be such that $\chi(a):=\sigma$.
We consider the history $\rho:= v_1 \beta v_3$, i.e.~player~$b$ has a card $1$ and player $a$ gives information to $c$.


Assume that player $b$ knows player $a$'s strategy.  We show that in this situation,  $b$ can infer from observing action $\beta$ that she must have a card of rank~1.

\begin{example}\label{ex:knowledge:1}
We consider an information perspective $\I$ with $ba \in \I$ and let $Z$ be the state $(\chi, \I, \rho)$.
We find that in this situation,  player $b$ knows that she has a card 1,  although player~$a$ informed player~$c$.
Formally, we have $\CGS_2, Z \Vdash \K_b p$.
Indeed, let $Z'=(\chi',\I', \rho')$ be any state with $Z \trianglelefteq_b Z'$.
This implies $\chi \sim^\I_b \chi'$, which by $a\in \I_b$ yields $\chi'(a)=\chi(a)=\sigma$.

Since $\rho'$ must be consistent with $\chi'$ under $\I'_b$, we find
\[
\rho' = v_1 \beta v_3  \quad\text{or}\quad \rho' = v_4 \gamma v_5.
\]
Since $\rho \sim_b \rho'$,  we find by~\eqref{eq:hanabi:1}
\[
\rho' = v_1 \beta v_3  \quad\text{or}\quad \rho' = v_4 \beta  v_6.
\]
Hence, we conclude $\rho' = v_1 \beta v_3$.
Therefore,  $\CGS_2, Z' \Vdash p$ and $\CGS_2, Z \Vdash \K_b p$.
\end{example}



Now, we assume that player $b$ does not know player $a$'s strategy.  In this situation, player~$b$ does not know that she has a card of rank 1 unless player $a$ tells her so.

\begin{example}\label{ex:noknowledge:1}
Formally,  we consider an information perspective $\I$ with $ba \notin \I$ and let $Z$ be the state $(\chi, \I, \rho)$.  
We will show  $\CGS_2, Z \nVdash \K_b p$.

First, we observe that $Z$ is  $b$-consistent.
Now let $\sigma'$ be a strategy such that $\sigma'(v_1) = \beta$ and $\sigma'(v_4) = \beta$. 
Further let $\chi'$ be an assignment with $\chi'(a)=\sigma'$.
Because of $ba \notin \I$  we find $a \notin \I_b$. 
Therefore $\chi \sim^\I_b \chi'$.
We set $\rho' = v_4 \beta v_6$
and observe by \eqref{eq:paKnows:2} and \eqref{eq:paKnows:3} that $\rho \sim_b \rho'$.
Further we have that~$\rho'$ is consistent with $\chi'$ under $\I_b$, which means that $ (\chi', \I', \rho')$ is $b$-consistent.
Therefore, we obtain  $(\chi, \I, \rho) \trianglelefteq_b   (\chi', \I', \rho')$.  
We have $ (\chi', \I', \rho') \nVdash p$ and thus conclude $\CGS_2, Z \nVdash \K_b p$.
\end{example}
Let us again consider Example~\ref{ex:knowledge:1} where we established that 
\begin{multline}\label{eq:precondition:1}
\text{if $b$ knows $a$'s strategy,  then  $b$ will infer from observing action $\beta$}\\
\text{ that she must have a card of rank 1.}
\end{multline}
However, note that this is not enough for player $a$ to safely choose action $\beta$.
What is required is that player $a$ knows~\eqref{eq:precondition:1}.  Only then can $a$  be sure that if she informs $c$,  player~$b$ will play her card 1, and thus informing $c$ about her card 2 is a good choice.


Hence, this is a situation where higher-order knowledge is relevant.  Namely,  we have that
if player $a$ knows that player~$b$ knows $a$'s strategy,  then player~$a$ knows that if she informs player $c$,  then player $b$ will know that she has a card~1.
%
%
%
%
We can formalize this situation as follows.

\begin{example}
We let $\I$ be an information perspective with $aba \in \I$ and consider the state $Z=(\chi, \I, v_1)$.
Let $Y= (\chi', \I', \rho')$ be an arbitrary state with $Z \trianglelefteq_a Y$.  Thus $\rho \sim_a \rho'$.  Therefore by~\eqref{eq:paKnows:1}, we get $\rho'=v_1$.
Moreover, we have \mbox{$\chi \sim^{\I}_a \chi'$}.  Because of $a \in \I_a$,  we get 
\[
\chi'(a)=\chi(a)=\sigma.
\]
Hence $\X_{\CGS_2}^{\chi'}{\rho'}=v_1 \beta v_3$.
So we let $X$ be the state $(\chi',\I',  v_1 \beta v_3)$.  We observe $ba \in \I[a] \subseteq \I'$.
Therefore,  we obtain $\CGS_2, X \Vdash \K_b p$ as in Example~\ref{ex:knowledge:1}.  Hence 
\[
\CGS_2, Y \Vdash \X \K_b p.
\]
Since $Y$ was arbitrary with  $Z \trianglelefteq_a Y$, we conclude 
\[
\CGS_2, Z \Vdash \K_a \X \K_b p.
\]
\end{example}

The reasoning of the previous example is only possible if agent $a$ knows that $b$ is informed about $a$'s strategy. 
Otherwise,  $a$ must take into account a situation where $b$ is not informed about $a$'s strategy, and thus $a$ considers the situation of Example~\ref{ex:noknowledge:1} possible.  Hence, even though $b$ will know that she has a card~1,  player~$a$ will not know that $b$ knows this.

\begin{example}
We let $\I$ be an information perspective with $ba \in \I$ but $aba \notin \I$  and consider an $a$-consistent state $Z=(\chi, \I, v_1)$.
We set $Y=(\chi, \I[a], v_1)$ and find $Z \trianglelefteq_a Y$ and $ba \notin \I[a]$.
 Further, we have  $\X_{\CGS_2}^{\chi}{v_1}=v_1 \beta v_3$ and let $X$ be the state  $(\chi,\I[a],  v_1 \beta v_3)$. 
As in Example~\ref{ex:noknowledge:1}, we find  
\[
\CGS_2, X \nVdash \K_b p.
\]
Hence $\CGS_2, Y \nVdash \X \K_b p$ and finally $\CGS_2, Z \nVdash \K_a \X \K_b p$. 
\end{example}

\section{Common knowledge}

In this section,  we investigate common knowledge of strategies. 
Agent $a$'s strategy is common knowledge among a group of agents $G$ if all agents in $G$ are informed about it, all agents in $G$ know that all agents in $G$ are informed about it, all agents in $G$ know that all agents in $G$ know that all agents in $G$ are informed about it, and so on.
Hence,  an information perspective $\I$ models that agent $a$'s strategy is common knowlege among $G$  if 
\[
\text{$wa \in \I$ for each $w\in G^{\geq 1}$.}
\]

We can recover negative introspection if all strategies are common knowledge.  This is captured by the notion of a fully informed information perspective.
Formally, we have the following definition.
\begin{definition}
We say that the information perspective $\I$ is \emph{fully informed} if $\I = \Ag^{\geq 2}$.
\end{definition}

A fully informed information perspective $\I$ satisfies the following important property for any agent $a$
\begin{equation}\label{eq:fully:1}
\I[a] = \I = \Ag^{\geq 2}.
\end{equation}

The following lemma states that Axiom B holds for fully informed information perspectives.
\begin{lemma}
Let $\CGS$ be a concurrent game structure and $Z = (\chi,\I,\rho)$ be a state with a fully informed information perspective. We have
\[
\CGS,Z \Vdash  \phi \to  \K_a  \lnot \K_a \lnot \phi .
\] 
\end{lemma}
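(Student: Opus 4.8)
The plan is to observe that $\phi \to \K_a \lnot \K_a \lnot \phi$ is the modal axiom~B, which is validated exactly when the accessibility relation $\trianglelefteq_a$ is symmetric on the relevant class of states. So the whole argument reduces to one structural claim: if $Z=(\chi,\I,\rho)$ has a fully informed information perspective and $Z \trianglelefteq_a Z'$, then $Z' \trianglelefteq_a Z$. Once this symmetry is in hand, the semantic part is routine: assuming $\CGS,Z \Vdash \phi$, I take an arbitrary $Z'$ with $Z \trianglelefteq_a Z'$; by symmetry $Z' \trianglelefteq_a Z$, and since $\CGS,Z \Vdash \phi$, the state $Z$ itself witnesses $\CGS,Z' \Vdash \lnot \K_a \lnot \phi$. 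As $Z'$ was arbitrary, $\CGS,Z \Vdash \K_a \lnot \K_a \lnot \phi$ (and if no such $Z'$ exists the conclusion holds vacuously).

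To establish symmetry I would write $Z'=(\chi',\I',\rho')$ and first show that $\I'$ is again fully informed. Condition~2 of $Z \trianglelefteq_a Z'$ gives $\I[a] \subseteq \I'$, and by~\eqref{eq:fully:1} we have $\I[a]=\Ag^{\geq 2}$; since every information perspective is a subset of $\Ag^{\geq 2}$, this forces $\I'=\Ag^{\geq 2}$. In particular $\I'[a]=\Ag^{\geq 2}=\I$ and $\I'_a=\Ag=\I_a$, so $\chi$ and $\chi'$ are measured against the same set of agents for~$a$.

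With this, I would verify the five defining conditions of $Z' \trianglelefteq_a Z$ in turn. Condition~1, $\chi' \sim^{\I'}_a \chi$, follows because $\I'_a=\I_a$ and the agreement $\chi(b)=\chi'(b)$ for all $b \in \I_a$ (condition~1 of $Z \trianglelefteq_a Z'$) is symmetric in $\chi$ and $\chi'$. Condition~2, $\I'[a] \subseteq \I$, is the equality $\Ag^{\geq 2}=\I$ just noted. Condition~3, $\rho' \sim_a \rho$, holds because $\sim_a$ is an equivalence relation. Conditions~4 and~5, the $a$-consistency of $Z'$ and of $Z$, are exactly conditions~5 and~4 of $Z \trianglelefteq_a Z'$. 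Hence $Z' \trianglelefteq_a Z$.

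The only genuinely delicate point is the propagation of full informedness: everything downstream, in particular the symmetry of the assignment-agreement condition, depends on knowing $\I'_a=\I_a$, which would fail for an arbitrary truthful perspective. The fully informed hypothesis is precisely what squeezes $\I'$ back up to $\Ag^{\geq 2}$ via $\I[a] \subseteq \I' \subseteq \Ag^{\geq 2}$, so this is where the assumption is used essentially; the remaining checks are mechanical.
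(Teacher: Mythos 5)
Your proof is correct and follows essentially the same route as the paper: establish that $\trianglelefteq_a$ is symmetric on states with fully informed perspectives (forcing $\I' = \Ag^{\geq 2} = \I$ via $\I[a] \subseteq \I' \subseteq \Ag^{\geq 2}$) and then use $Z$ itself as the witness for $\lnot \K_a \lnot \phi$ at any accessible $Z'$. You even spell out the propagation step $\I'=\Ag^{\geq 2}$ more explicitly than the paper, which states it tersely as $\I'=\I$ by its equation~\eqref{eq:fully:1}.
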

\begin{proof}
We show that the accessibility relation in symmertric,  i.e.~for each state $Z' = (\chi',\I',\rho')$ with $Z \trianglelefteq_a Z'$, 
we have $Z' \trianglelefteq_a Z$.
Indeed,  we observe that by~\eqref{eq:fully:1}, we get $\I' = \I$.
Thus we have $\chi \sim^{\I'}_a \chi'$ and $\I'[a] \subseteq \I$,  which yields
$Z' \trianglelefteq_a Z$.

Now assume $\CGS,Z \Vdash  \phi$. 
Let $Z'$ be any state with $Z \trianglelefteq_a Z'$.  
We have to show  $\CGS,Z' \Vdash   \lnot \K_a \lnot \phi$, that is we have to find a state $Z''$ with  $Z' \trianglelefteq_a Z''$ and $\CGS,Z'' \Vdash  \phi$.  By the above observations, we can use $Z$ as  $Z''$.
\end{proof}

As usual,   this yields negative introspection.

\begin{lemma}
Let $\CGS$ be a concurrent game structure and $Z = (\chi,\I,\rho)$ be a state with a fully informed information perspective.
We have
\[
\CGS,Z \Vdash  \lnot \K_a \phi \to  \K_a  \lnot \K_a  \phi .
\] 
\end{lemma}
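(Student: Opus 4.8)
The plan is to derive negative introspection from Axiom B together with the positive introspection result already established in Lemma~\ref{l:ax4:1}. The standard modal-logic fact is that the combination of the $4$-axiom (positive introspection, $\K_a\phi \to \K_a\K_a\phi$) and the $B$-axiom ($\phi \to \K_a\lnot\K_a\lnot\phi$) yields the $5$-axiom (negative introspection, $\lnot\K_a\phi \to \K_a\lnot\K_a\phi$). Since the preceding lemma gives us Axiom B for fully informed information perspectives, and Lemma~\ref{l:ax4:1} gives positive introspection unconditionally, the proof should be a short syntactic derivation.

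Concretely, I would argue as follows. Fix a state $Z=(\chi,\I,\rho)$ with $\I$ fully informed, and assume $\CGS, Z \Vdash \lnot\K_a\phi$. Setting $\psi := \lnot\K_a\phi$, the instance of Axiom B from the previous lemma gives $\CGS, Z \Vdash \psi \to \K_a\lnot\K_a\lnot\psi$, hence $\CGS, Z \Vdash \K_a\lnot\K_a\lnot\lnot\K_a\phi$. The goal $\K_a\lnot\K_a\phi$ then follows provided we can simplify $\lnot\K_a\lnot\lnot\K_a\phi$ to $\lnot\K_a\phi$ under the $\K_a$. The inner step uses the duplicate-negation equivalence $\lnot\lnot\K_a\phi \equiv \K_a\phi$, and then we need $\lnot\K_a\K_a\phi \to \lnot\K_a\phi$, which is the contrapositive of positive introspection $\K_a\phi \to \K_a\K_a\phi$ from Lemma~\ref{l:ax4:1}. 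Since all these are valid formulas and $\K_a$ is a normal modality (its truth clause quantifies universally over $\trianglelefteq_a$-successors, so it respects valid implications), the desired implication propagates under the $\K_a$ operator.

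Rather than lean on a generic normal-modality monotonicity principle that the paper has not formally stated, I would prefer to make the argument semantically on the accessibility relation, mirroring the proofs of the earlier lemmas. From Axiom B (proved above via symmetry of $\trianglelefteq_a$) and Lemma~\ref{l:ax4:1} (proved via transitivity of $\trianglelefteq_a$), the operative fact is that $\trianglelefteq_a$ restricted to states with fully informed perspective is both symmetric and transitive; being additionally reflexive at $a$-consistent states (Lemma~\ref{l:refl:1}, whose hypothesis $\lnot\K_a\bot$ is guaranteed here since fully informed perspectives are truthful), it is an equivalence relation on the relevant states. For an equivalence relation, negative introspection is immediate: assume $\CGS, Z \Vdash \lnot\K_a\phi$, so there is $Z'$ with $Z\trianglelefteq_a Z'$ and $\CGS, Z' \nVdash \phi$; to show $\CGS, Z \Vdash \K_a\lnot\K_a\phi$, take any $Z''$ with $Z\trianglelefteq_a Z''$ and observe that by symmetry $Z''\trianglelefteq_a Z$ and by transitivity $Z''\trianglelefteq_a Z'$, so $Z'$ witnesses $\CGS, Z'' \Vdash \lnot\K_a\phi$.

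The main obstacle is a bookkeeping one rather than a conceptual one: I must ensure the invariant that every state reached via $\trianglelefteq_a$ from a fully informed state is again fully informed, so that the symmetry and reflexivity clauses genuinely apply at every step of the chain. This is exactly what \eqref{eq:fully:1} secures — condition~2 of accessibility forces $\I[a]=\Ag^{\geq 2}\subseteq\I'$, and since $\I'\subseteq\Ag^{\geq 2}$ always, we get $\I'=\Ag^{\geq 2}$, so full informedness is preserved. Once that invariant is checked, the equivalence-relation argument goes through cleanly, and the lemma follows.
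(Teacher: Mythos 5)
Your semantic argument is essentially the paper's own proof: symmetry of $\trianglelefteq_a$ (from the Axiom B lemma, using that full informedness forces $\I'=\I$) combined with transitivity (Lemma~\ref{l:ax4:1}) yields the Euclidean property, and the witness $Z'$ for $\lnot\K_a\phi$ then transfers to every successor $Z''$ exactly as in the paper. The only inessential deviation is your equivalence-relation framing: your side claim that the hypothesis $\lnot\K_a\bot$ of Lemma~\ref{l:refl:1} is ``guaranteed since fully informed perspectives are truthful'' is not quite right (it additionally requires $a$-consistency of $Z$, without which no state is accessible and $\K_a\bot$ holds vacuously), but since your actual argument uses only symmetry and transitivity, nothing in the proof is affected.
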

\begin{proof}
From the proof of Lemma~\ref{l:ax4:1}, we know that the accessibility relation is transitive, and the previous proof tells us that it is symmetric.  As usual, this yields that it is Euclidean, i.e.
\begin{equation}\label{eq:Euclid:1}
Z \trianglelefteq_a X \text{ and } Z \trianglelefteq_a Y \quad\text{implies}\quad Y \trianglelefteq_a X.
\end{equation}

Now assume  $\CGS,Z \Vdash  \lnot \K_a \phi$. Thus there is a state $X$ with $Z \trianglelefteq_a X$ and $\CGS,  X \Vdash  \lnot \phi$.  
Let $Y$ be an arbitrary state with  $Z \trianglelefteq_a Y$.  
By~\eqref{eq:Euclid:1},  we find $Y \trianglelefteq_a X$ and thus 
$\CGS, Y \Vdash  \lnot  \K_a  \phi$. 
Since $Y$ was arbitrary with  $Z \trianglelefteq_a Y$,  we conclude
\[
\CGS,Z \Vdash  \K_a \lnot  \K_a  \phi. \qedhere
\] 
\end{proof}


We illustrate the need for common knowledge of strategies with the problem of binary consensus.
In this problem,   we have two agents,  each of whom has an input value that can be either 0 or 1, and they have to agree on a common value,  which has to be one of their input values.  In this context, agreeing on a value means that the value is common knowledge for the two agents.

We consider the following scenario.
There are two agents, $a$ and $b$, and the group $G$ consists of these two agents. 
In the following, we will just say common knowledge when we mean common knowledge among $G$.
One agent has 0 as the input value; the other agent has an input value of 1. 
They exchange their values using reliable communication.  So, it is common knowledge 
that the agents have 0 and 1 as their respective input values.
We represent this by the position $v_1$ in a concurrent game structure $\CGS_3$ that has the same form as the one given in  Figure~\ref{fig:CGS1}.

Now, agent $a$ has to choose its output value, and we only consider the question of whether her choice will be common knowledge.
Action $\alpha$ means agent $a$ uses the minimum function to choose its output value.
Action $\beta$ means agent $a$ uses the maximum function.
So $p$ stands for $a$ decided for 0, and $q$ means that $a$ decided for 1 as the output value.

We assume agent $a$ uses the minimum function,  i.e.~her strategy is to perform action~$\alpha$.
We let $\chi$ be such that \mbox{$\chi(a)(v_1)=\alpha$} and  $\chi'$ be such that $\chi'(a)(v_1)=\beta$.
Agent $a$ does not announce which output value she has chosen.  That is in~$\CGS_3$,  we have $v_2 \sim_b v_3$ and $\alpha  \sim_b \beta$.
Let $\rho := v_1 \alpha v_2$ and  $\rho' := v_1 \beta v_3$.  We find that $\rho \sim_b \rho'$.

We now show that common knowledge of $a$'s strategy is necessary for common knowledge of $p$.
\begin{theorem}\label{th:ck:1}
Let $\I$ be an information perspective such that $a$'s strategy is not common knowledge.
We have
\[
\CGS_3,  (\chi, \I,\rho) \nVdash \CK p.
\]
\end{theorem}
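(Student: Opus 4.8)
The plan is to refute $\CK p$ directly from its truth clause: it suffices to produce one state $Z^{*}$ with $(\chi,\I,\rho)\trianglelefteq^{*}_G Z^{*}$ whose history ends in $v_3$, since $p\notin\ell(v_3)$ and hence $\CGS_3,Z^{*}\nVdash p$. First I would unpack the hypothesis. Reading ``$a$'s strategy is common knowledge among $G=\{a,b\}$'' as ``$wa\in\I$ for every $w\in G^{\geq 1}$ for which $wa$ is repetition-free'' (the junction $w a$ must not create an adjacent repetition, so the relevant $w$ are exactly the alternating words ending in $b$, giving $ba,aba,baba,\dots\in\I$), the failure of common knowledge yields a finite word $w=c_1\cdots c_k\in G^{\geq 1}$ with $c_k=b$ and $wa=c_1\cdots c_k a\notin\I$. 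I fix such a $w$.

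The core is a peeling argument along $w$. Keeping the assignment $\chi$ and the history $\rho=v_1\alpha v_2$ fixed, I set $\I_0:=\I$, $\I_i:=\I_{i-1}[c_i]$, and $Z_i:=(\chi,\I_i,\rho)$. Each step $Z_{i-1}\trianglelefteq_{c_i}Z_i$ is immediate: the strategy-agreement $\chi\sim^{\I_{i-1}}_{c_i}\chi$ and the history-indistinguishability $\rho\sim_{c_i}\rho$ are reflexive, the inclusion $\I_{i-1}[c_i]\subseteq\I_i$ holds by construction, and both endpoints are $c_i$-consistent because $\rho$ is the actual play of $\chi$ and so is consistent with $\chi$ under every agent, a fortiori under $(\I_i)_{c_i}$. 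The content lies in the invariant $c_{i+1}\cdots c_k a\notin\I_i$, proved by induction from the base case $c_1\cdots c_k a\notin\I_0$: membership $c_{i+1}\cdots c_k a\in\I_{i-1}[c_i]$ would force either $c_i c_{i+1}\cdots c_k a\in\I_{i-1}$ (false by the induction hypothesis) or $c_{i+1}\cdots c_k a\in\I_{i-1}$ with leading symbol $c_i$ (impossible, since $w$ alternates and $c_{i+1}\neq c_i$). After $k-1$ peeling steps this gives $ba\notin\I_{k-1}$, hence $a\notin(\I_{k-1})_b$.

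Finally I perform the flipping step with agent $b$. Let $\chi'$ agree with $\chi$ off $a$ and satisfy $\chi'(a)(v_1)=\beta$, and set $\rho'=v_1\beta v_3$ and $\I_k:=\I_{k-1}[b]$. I claim $Z_{k-1}\trianglelefteq_b(\chi',\I_k,\rho')$: strategy-agreement holds since $a\notin(\I_{k-1})_b$ and $\chi,\chi'$ agree off $a$; history-indistinguishability is exactly $\rho\sim_b\rho'$, which holds because $a$ does not announce her choice ($v_2\sim_b v_3$ and $\alpha\sim_b\beta$); the inclusion $\I_{k-1}[b]\subseteq\I_k$ is definitional; and both consistency conditions hold because $ba\notin\I_k$ (so $a\notin(\I_k)_b$) while $\rho$ and $\rho'$ are the actual plays of $\chi$ and $\chi'$. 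Since $\last(\rho')=v_3$ and $p\notin\ell(v_3)$, the target refutes $p$; chaining all steps gives $(\chi,\I,\rho)\trianglelefteq^{*}_G(\chi',\I_k,\rho')$, whence $\CGS_3,(\chi,\I,\rho)\nVdash\CK p$.

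The main obstacle I anticipate is the bookkeeping around $\I[\cdot]$: one must check that taking the minimal admissible perspective $\I_i=\I_{i-1}[c_i]$ at each step neither reintroduces the forbidden suffix nor breaks $c_i$-consistency. This is precisely where the no-adjacent-repetition property of words over $G$ is essential, as it annihilates the positive-introspection summand $\{u\in\I_{i-1}\mid u=c_i u'\}$ of $\I[c_i]$ on the suffixes we track. The remaining checks---the two reflexive indistinguishabilities and the consistency of the fixed actual play---are routine, and the only genuinely interpretive point is the clarified reading of ``not common knowledge'' that rules out the degenerate word $w=a$.
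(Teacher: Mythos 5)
Your proposal is correct and is essentially the paper's own argument: the paper proves, by induction on $i$, that $w_i \notin \I$ implies $\CGS_3,(\chi,\I,\rho) \nVdash \K_{x_n}\cdots\K_{x_2}p$, and unfolding that induction yields exactly your chain --- the peeling steps $(\chi,\I,\rho) \trianglelefteq_x (\chi,\I[x],\rho)$ with the suffix-exclusion invariant (where, just as you note, the no-adjacent-repetition convention kills the positive-introspection summand of $\I[x]$), followed by the final $b$-step to $(\chi',\I[b],\rho')$ with $\last(\rho')=v_3$. The only difference is packaging: the paper routes through the refuted formulas $\phi_i$ and the validity of $\CK p \to \phi_i$, whereas you appeal directly to the truth clause for $\trianglelefteq^{*}_G$; your explicit consistency checks and the clarified reading of ``not common knowledge'' spell out details the paper leaves implicit.
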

\begin{proof}
We define sequences $w_i \in \Ag^{\geq 2}$ for $i \leq 1$ as follows:
$w_1 := ba$,  $w_2 := aba$,  $w_3 := baba$,  and so on.
For a sequence $w= x_n \cdots x_2 x_1 \in  \Ag^{\geq 2}$, we let $\phi_w$ be the formula
$\K_{x_n} \cdots \K_{x_2} p$.  Further, we write $\phi_i$ for  $\phi_{w_i}$.
For instance,  we have  $\phi_1 = \K_b p$ and $\phi_2 = \K_a \K_b p$.

We show
\[
\text{for all information perspectives $\I$},  w_i \notin \I \text{ implies }  \CGS_3,(\chi, \I,\rho) \nVdash \phi_i 
\]
by induction on $i$.

Case $i=1$.
Since $ba \notin \I$, we $a \notin \I_b$.
Therefore,  $\chi \sim^{\I}_b \chi'$. 
This yields $ (\chi, \I,\rho) \trianglelefteq_b  (\chi', \I[b],\rho')$.
Because of $\CGS_3, (\chi', \I[b],\rho') \nVdash p$, we conclude  
\[
  \CGS_3, (\chi, \I,\rho) \nVdash \K_b p.
\]
Case $i=j+1$.
Let $x\in \Ag$ be such that $w_i = x w_j$.
We have 
\[
(\chi, \I,\rho) \trianglelefteq_x  (\chi, \I[x],\rho).
\]
By the definition of $\I[x]$ and $w_i \notin \I$,  we find $w_j \notin \I[x]$.
Thus,  by I.H.,  we obtain 
\[
\CGS_3, (\chi, \I[x],\rho) \nVdash \phi_j.
\]
Therefore,  $  \CGS_3, (\chi, \I,\rho) \nVdash \K_x \phi_j$, which is  
\[
\CGS_3,(\chi, \I,\rho) \nVdash \phi_i.
\]
If $a$'s strategy is not common knowledge, there exists $i$ with $w_i \notin \I$ and thus $ \CGS_3,(\chi, \I,\rho) \nVdash \phi_i$.
Since $\CK p \to \phi_i$ is valid, we conclude $\CGS_3,(\chi, \I,\rho) \nVdash \CK p$.
\end{proof}

Now, we show that if $a$'s strategy is common knowledge,  then $p$ is common knowledge.
\begin{theorem}\label{th:ck:2}
Let $\I$ be such that $a$'s strategy is common knowledge.
We have
\[
\CGS_3,  (\chi, \I,\rho) \Vdash \CK p.
\]
\end{theorem}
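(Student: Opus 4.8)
The plan is to unfold the definition of $\CK p$: we must show $\CGS_3, Z' \Vdash p$ for every state $Z'$ with $(\chi,\I,\rho) \trianglelefteq^{*}_G Z'$. Rather than analyse arbitrary chains directly, I would exhibit a set $S$ of states that contains $(\chi,\I,\rho)$, is closed under $\trianglelefteq_x$ for every $x \in G = \{a,b\}$, and all of whose members satisfy $p$. Since every $Z'$ with $(\chi,\I,\rho)\trianglelefteq^{*}_G Z'$ is obtained from $(\chi,\I,\rho)$ by finitely many $\trianglelefteq_x$-steps with $x\in G$, closure of $S$ then yields $Z' \in S$ and hence $\CGS_3,Z' \Vdash p$, by a straightforward induction on the length of the chain.

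Concretely, I would let $S$ be the set of states $(\chi',\I',\rho')$ such that (i) $a$'s strategy is common knowledge in $\I'$, (ii) $\chi'(a) = \chi(a)$, and (iii) $\rho' = v_1 \alpha v_2$. The initial state $(\chi,\I,\rho)$ lies in $S$ by hypothesis and by the choice $\rho = v_1\alpha v_2$, and every member of $S$ satisfies $p$ because $\last(\rho') = v_2$ and $p \in \ell(v_2)$. It remains to verify that $S$ is closed under $\trianglelefteq_x$ for $x \in G$.

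The heart of the argument, and the step I expect to be the main obstacle, is the preservation of clause (i): if $a$'s strategy is common knowledge in $\I'$ and $x\in G$, then it is common knowledge in $\I'[x]$, and hence, using $\I'[x]\subseteq \I''$, in the successor's information perspective $\I''$. I would argue that each alternating sequence $ba, aba, baba, \dots$ lies in $\I'[x]$ by a case distinction on its first letter together with the two components in the definition of $\I'[x]$: if the sequence already starts with $x$ it is caught by $\{w\in\I' : w = xw'\}$, while if it starts with the other agent then prefixing $x$ produces the next alternating sequence, which is in $\I'$, so it is caught by $\{w : xw\in\I'\}$. This closure computation is exactly the positive counterpart of the impossibility argument in Theorem~\ref{th:ck:1}, and it is precisely here that the infinitude of a common-knowledge perspective is essential.

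Clauses (ii) and (iii) are then routine. For (ii): since $a$'s strategy is common knowledge in $\I'$ we have $a \in \I'_x$ (trivially for $x=a$, and via $ba\in\I'$ for $x=b$), so $\chi'\sim^{\I'}_x\chi''$ forces $\chi''(a)=\chi'(a)=\chi(a)$. For (iii): the successor history satisfies $\rho'' \sim_x \rho' = v_1\alpha v_2$, so it has length one and its first position is indistinguishable from $v_1$; as $v_1$ is distinguishable from $v_2$ and $v_3$ for both agents in $\CGS_3$ (the only nontrivial identification being $v_2\sim_b v_3$), the first position of $\rho''$ is $v_1$. Moreover $Z''$ is $x$-consistent, and by clauses (i) and (ii) just established we have $a\in\I''_x$ with $\chi''(a)(v_1)=\alpha$; consistency therefore forces the joint action taken at $v_1$ to be $\alpha$, whence $\rho'' = v_1\alpha v_2$. (For $x=a$ this is immediate, since $a$ distinguishes all positions and actions.) With all three clauses preserved, $S$ is closed under $\trianglelefteq_x$ for $x\in G$, and the theorem follows.
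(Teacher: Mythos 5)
Your proof is correct and takes essentially the same approach as the paper: your invariant set $S$ closed under $\trianglelefteq_x$ is just a repackaging of the paper's induction along $\trianglelefteq_G$-chains, with the same two invariants (the history is $v_1\alpha v_2$, hence $p$ holds, and common knowledge of $a$'s strategy persists), and your case split on the first letter of each alternating sequence against the two components of $\I'[x]$ is exactly the paper's argument for preserving common knowledge. The only cosmetic difference is that you carry $\chi'(a)=\chi(a)$ as an explicit invariant, whereas the paper re-extracts $\chi^j(a)(v_1)=\alpha$ from $x$-consistency at each step.
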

\begin{proof}
By induction on $i$ we show that for all states $Z_0, \ldots, Z_i$ such that
\[
(\chi, \I,\rho)  =: Z_0 \trianglelefteq_{x_1} Z_1  \trianglelefteq_{x_2} \cdots \trianglelefteq_{x_i} Z_i ,
\]
we have 
\begin{enumerate}
\item
$\CGS_3,Z_i \Vdash p$ and
\item
for $Z_i=(\sigma^i, \I^i, \rho^i)$,  agent~$a$'s strategy is common knowledge in $\I^i$.
\end{enumerate}

Case $i=0$. 
We have $\last(\rho) = v_2$ and hence $\CGS_3,(\chi, \I,\rho)  \Vdash p$.
By assumption, we also have that $a$'s strategy is common knowledge in $\I$.

Case $i=j+1$.
We let $Z_j = (\chi^j, \I^j, \rho^j)$.
By I.H.~we find $\CGS_3,Z_j  \Vdash p$,  which means that $\last(\rho^j) = v_2$, and
\begin{equation}\label{eq:IH:1}
\text{agent $a$'s strategy is common knowledge in $\I^{j}$.}
\end{equation}
Thus,  $a \in \I^{j}_{x_i}$.
Since $\rho^j$ is consistent with $\chi^j$ under~$\I^j_{x_i}$, we find $\chi^j(a)(v_1) = \alpha$.
We have \mbox{$\chi^j \sim^{\I^j}_{x_i} \chi^i$}.
Therefore, $\chi^i(a)(v_1)= \chi^j(a)(v_1) = \alpha $.
Since $\rho^i$ is consistent with $\chi^i$ under $\I^i_{x_i}$,  we get  $\last(\rho^i) = v_2$.
Hence, we conclude 
\[
\CGS_3,Z_i \Vdash p.
\]

To show the second claim,  we let $w=w_h\cdots w_1 \in \Ag^{\geq 1}$.
If $w_h \neq x_i$,  then by~\eqref{eq:IH:1} we have $x_{i}wa \in \I^j$ and thus 
\[
wa \in \I[x_i] \subseteq \I^i.
\]
If $w_h = x_i$,  then  by~\eqref{eq:IH:1} we have $wa \in \I^j$ and again 
\[
wa \in \I[x_i] \subseteq \I^i.
\]
Hence, $a$'s strategy is common knowledge in $\I^i$.

We have shown that for any state $Z_i$ that is reachable from $(\chi, \I,\rho) $, we have $\CGS_3,Z_i \Vdash p$.  Therefore  $\CGS_3, (\chi, \I,\rho)  \Vdash \CK p$ as desired.
\end{proof}


\section{Model checking}

Finally, we investigate the model checking problem. 
We will show that it is decidable for $\lang$-formulas.

We begin with the following auxiliary notion. 
The $\K$-depth of an $\lang$-formula $\phi$,  in symbols $\kd(\phi)$,  is inductively defined by:
\begin{align*}
&\kd(p) := 0  &   & \kd(\phi \to \psi):=\max(\kd(\phi),\kd(\phi)) \\
& \kd(\bot):=0  &
&\kd(\K_a \phi)  := \kd(\phi)+1  \\
&  \kd(\X \phi) := \kd(\phi).
\end{align*}

For a set $X \subseteq \Ag^{*}$ and a natural number $n$,
we set 
\[
X \! \upharpoonright_{n}:=\{ w \in X \ |\ \len(w) \leq n\}.
\]

\begin{lemma}\label{l:finiteI:1}
Let $\CGS$ be a concurrent game structure, $Z=(\chi,\I,\rho)$ be a state,   and $\phi$ be an $\lang$-formula.
For any information persprective $\I'$ with $\I  \! \upharpoonright_{\kd(\phi)+2} = \I'  \! \upharpoonright_{\kd(\phi)+2}$,  we have
\[
\CGS,  (\chi, \I,\rho) \Vdash \phi \quad\text{if{f}}\quad \CGS,  (\chi, \I',\rho) \Vdash \phi.
\]
\end{lemma}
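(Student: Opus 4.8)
The plan is to prove the stronger, symmetric statement by structural induction on $\phi$: for \emph{all} pairs of information perspectives $\I,\I'$ with $\I\! \upharpoonright_{\kd(\phi)+2} = \I'\! \upharpoonright_{\kd(\phi)+2}$, the formula $\phi$ has the same truth value at $(\chi,\I,\rho)$ and $(\chi,\I',\rho)$. Since the agreement hypothesis is symmetric in $\I$ and $\I'$, it suffices to establish one implication in each case. I would also rely throughout on monotonicity of the restriction operation: if two sets agree up to length $n$, they agree up to every $m \le n$.

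The cases $p$ and $\bot$ are immediate, as their truth depends only on $\last(\rho)$. For $\phi \to \psi$ I would use that $\kd(\phi),\kd(\psi) \le \kd(\phi \to \psi)$, so the hypothesis restricts to agreement at levels $\kd(\phi)+2$ and $\kd(\psi)+2$, and the induction hypothesis applied to $\phi$ and $\psi$ closes the case. For $\X\phi$ I would note that $\kd(\X\phi)=\kd(\phi)$, that the one-step continuation $\X_\CGS^\chi\rho$ does not depend on $\I$, and that the information perspective is carried over unchanged; hence the induction hypothesis for $\phi$ at the states $(\chi,\I,\X_\CGS^\chi\rho)$ and $(\chi,\I',\X_\CGS^\chi\rho)$ suffices.

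The real work is the case $\K_a\phi$, where $\kd(\K_a\phi)+2=\kd(\phi)+3$, so the hypothesis is $\I\! \upharpoonright_{\kd(\phi)+3}=\I'\! \upharpoonright_{\kd(\phi)+3}$. First I would observe that $\I_a=\I'_a$, since these sets depend only on length-$2$ words, so $(\chi,\I,\rho)$ and $(\chi,\I',\rho)$ are simultaneously $a$-consistent; if neither is, $\K_a\phi$ holds vacuously at both. Assuming $a$-consistency, I would take an arbitrary accessible state $(\chi'',\I'',\rho'')$ with $(\chi,\I',\rho)\trianglelefteq_a(\chi'',\I'',\rho'')$ and construct a corrected perspective $\tilde\I'' := \I''\cup\I[a]$ that makes the \emph{same} target accessible from $(\chi,\I,\rho)$ while agreeing with $\I''$ up to length $\kd(\phi)+2$. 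The inclusion $\I[a]\subseteq\tilde\I''$ needed for accessibility holds by construction, while $\chi\sim^\I_a\chi''$ and $\rho\sim_a\rho''$ carry over since $\I_a=\I'_a$ and the history relation does not see $\I$.

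Two verifications carry the argument and are where I expect the difficulty. The first is the length bookkeeping: a word $w\in\I[a]$ of length $\le\kd(\phi)+2$ comes either from $aw\in\I$ (length $\le\kd(\phi)+3$) or from $w\in\I$ beginning with $a$ (length $\le\kd(\phi)+2$), so $\I[a]\! \upharpoonright_{\kd(\phi)+2}$ is determined by $\I\! \upharpoonright_{\kd(\phi)+3}$; hence $\I[a]\! \upharpoonright_{\kd(\phi)+2}=\I'[a]\! \upharpoonright_{\kd(\phi)+2}$, and since $\I'[a]\subseteq\I''$ this is contained in $\I''\! \upharpoonright_{\kd(\phi)+2}$, yielding $\tilde\I''\! \upharpoonright_{\kd(\phi)+2}=\I''\! \upharpoonright_{\kd(\phi)+2}$. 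The second, subtler point is that $\tilde\I''$ must remain $a$-consistent so that $(\chi'',\tilde\I'',\rho'')$ is a legitimate state; here I would show $\tilde\I''_a=\I''_a$. Enlarging $\I''$ by $\I[a]$ can only add to $\I''_a$ those $b$ with $ab\in\I[a]$, but since words in $\Ag^{*}$ have no adjacent repetitions, $ab\in\I[a]$ forces $ab\in\I$, i.e.\ $b\in\I_a=\I'_a$; and $\I'[a]\subseteq\I''$ already gives $\I'_a\subseteq\I''_a$, so no new agents appear and consistency under $\I''_a$ suffices. Once $\tilde\I''$ is validated, $(\chi,\I,\rho)\Vdash\K_a\phi$ gives $(\chi'',\tilde\I'',\rho'')\Vdash\phi$, and the induction hypothesis at depth $\kd(\phi)$ converts this to $(\chi'',\I'',\rho'')\Vdash\phi$, completing the case and the induction.
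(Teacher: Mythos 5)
Your proof is correct and follows essentially the same route as the paper's: the heart of the $\K_a$ case --- repairing the witness perspective to $\tilde{\I}'' := \I'' \cup \I[a]$, the length bookkeeping showing $\tilde{\I}''\!\upharpoonright_{\kd(\phi)+2} = \I''\!\upharpoonright_{\kd(\phi)+2}$ from the hypothesis at level $\kd(\phi)+3$, and the $a$-consistency check via $\I_a = \I'_a$ and $\tilde{\I}''_a = \I''_a$ --- is exactly the paper's construction $\I_0 := \I[a] \cup \I'_0$ with the same verifications. The remaining differences are cosmetic: you use a plain structural induction with the perspectives universally quantified instead of the paper's induction on $\K$-depth with a subinduction on $\phi$, you argue the transfer directly rather than by contraposition, and you keep the accessible state's assignment and history general, which is if anything slightly more careful than the paper's notation.
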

\begin{proof}
We show by induction on $n$ and subinduction on $\phi$: if $\I  \! \upharpoonright_{n+2} = \I'  \! \upharpoonright_{n+2}$, then for any $\lang$-formula $\phi$ with $\kd(\phi)\leq  n$,  we have 
\begin{equation}\label{eq:mc:1}
\CGS,  (\chi, \I,\rho) \Vdash \phi \quad\text{if{f}}\quad \CGS,  (\chi, \I',\rho) \Vdash \phi.
\end{equation}

Case $n=0$.   
The formula $\phi$ does not contain any $\K$-modality, and \eqref{eq:mc:1} trivially holds.

Case $n= k+1$.
For the subinduction, we only show the case $\phi = \K_a \psi$.
We show the direction from left to right by contraposition.  The other direction is analogous.
Assume 
\[
\CGS,  (\chi, \I',\rho) \nVdash \K_a \psi.
\]
Thus,  there exists an information perspective $\I'_0$ such that
\begin{equation}\label{eq:mc:6.1}
\CGS,  (\chi, \I'_0,\rho) \nVdash \psi 
\end{equation}
and
\begin{equation}\label{eq:mc:6.2}
(\chi, \I',\rho)  \trianglelefteq_a    (\chi, \I'_0,\rho).
\end{equation}
Hence
\begin{equation}\label{eq:mc:5}
\I'[a] \subseteq \I'_0.
\end{equation}
Now we set 
\begin{equation}\label{eq:mc:4}
\I_0:= \I[a] \cup \I'_0.
\end{equation}
We trivially have
\begin{equation}\label{eq:mc:2}
\I[a] \subseteq \I_0.
\end{equation}
Further we have 
\begin{equation}\label{eq:mc:3}
\I_0  \!\upharpoonright_{n+1} \ =\ \I'_0 \!\upharpoonright_{n+1}.
\end{equation}
Indeed,  $\supseteq$ holds by definition~\eqref{eq:mc:4}.  For $\subseteq$,  we assume 
\[
w \in \I_0  \!\upharpoonright_{n+1}
\]
and distinguish the cases according to~\eqref{eq:mc:4}:
If $w \in \I'_0$, then $w \in  \I'_0  \!\upharpoonright_{n+1}$ and we are done.
If $w \in \I[a]$, then  we distinguish:
\begin{enumerate}
\item 
$aw \in \I$.  We have $aw \in  \I  \!\upharpoonright_{n+2} \ = \ \I'  \!\upharpoonright_{n+2}$. 
Thus  $w \in \I'[a]$ and by~\eqref{eq:mc:5}, we get $w \in \I'_0$.  Hence $w \in  \I'_0 \!\upharpoonright_{n+1}$.
\item 
$w \in \I$ and $w=aw'$ for some $w'$.  Similar to the previous case.
\end{enumerate}
From \eqref{eq:mc:3} and \eqref{eq:mc:6.1}, we get by I.H.~that 
$\CGS,  (\chi, \I_0,\rho) \nVdash \psi$.
Further we have $\I  \! \upharpoonright_{2} = \I'  \! \upharpoonright_{2}$  by assumption and $\I_0  \! \upharpoonright_{2} = \I'_0  \! \upharpoonright_{2}$ by \eqref{eq:mc:3}.
We obtain $\I_a = \I'_a$ and $(\I_0)_a = (\I'_0)_a$, respectively.
Hence,  using \eqref{eq:mc:6.2}, we find that
$ (\chi, \I,\rho)$ and  $(\chi, \I_0,\rho)$ are $a$-consistent.
Together with \eqref{eq:mc:2},  this yields
\[
(\chi, \I,\rho)  \trianglelefteq_a    (\chi, \I_0,\rho). 
\]  
We finally conclude
$\CGS,  (\chi, \I,\rho) \nVdash \K_a \psi$.
\end{proof}

The \emph{duration} of a history is the number of joint actions occurring in it, i.e.~the duration of $v_0 \alpha_1 v_1 \ldots \alpha_n v_n$ is $n$.  The duration of a state $(\chi,\I,\rho)$ is the duration of $\rho$.

Let $Z$ be a state with duration $n$.  By Definition~\ref{def:truth:1}, we find that
\begin{enumerate}
\item
to evaluate a formula  of the form  $\X \phi$ at $Z$, we have to evaluate $\phi$ at a state with duration $n+1$;
\item 
to evaluate a formula  of the form  $\K_a \phi$ at $Z$, we have to evalute $\phi$ at states with duration $n$.
\end{enumerate}
Let $\phi$ be an $\lang$-formula with a maximal nesting depth of $\X$-operators of~$m$.
We find that to evaluate $\phi$ at $Z$, we only have to consider states with a duration of at most  $n+m$.

This,  Lemma~\ref{l:finiteI:1}, and the fact that a
concurrent game structure consists only of finitely many actions and finitely many positions
together yield that to evaluate an $\lang$-formula at a given state, we only have to consider finitely many other states. Therefore, we obtain the following corollary.
\begin{corollary}
Model checking is decidable for $\lang$-formulas.
\end{corollary}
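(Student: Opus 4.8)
The plan is to turn the inductive truth definition (Definition~\ref{def:truth:1}) into a recursive decision procedure and to prove that it terminates by exhibiting, for any input, only finitely many states that need to be inspected. Fix a finite concurrent game structure $\CGS$, an $\lang$-formula $\phi$, and a state $Z = (\chi,\I,\rho)$; let $m$ be the maximal nesting depth of $\X$-operators in $\phi$ and let $n$ be the duration of $\rho$. Since only the restriction $\I\!\upharpoonright_{\kd(\phi)+2}$ and the values of $\chi$ on histories of duration at most $n+m$ turn out to be relevant, I would first argue that the input state may be presented by this finite data. The clauses for $p$, $\bot$, and $\to$ are immediate, and the recursion on subformulas is well-founded, so the only genuine issue is to bound the branching introduced by the universal quantifier in the $\K_a$-clause: I must show that it ranges, up to equality of truth value, over finitely many states.

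First I would collect the three finiteness ingredients already isolated before the corollary. By the duration analysis, evaluating $\phi$ at $Z$ only ever requires states of duration at most $n+m$; and since $\CGS$ has finitely many positions and finitely many actions, hence finitely many joint actions, there are only finitely many histories of duration at most $n+m$. Thus the history component $\rho'$ of any relevant state ranges over a finite set. For the information perspective, Lemma~\ref{l:finiteI:1} shows that the truth of a subformula $\psi$ depends on $\I'$ only through the finite restriction $\I'\!\upharpoonright_{\kd(\psi)+2}$; as the recursion descends through $\K_a$ the $\K$-depth strictly decreases, so the relevant window shrinks, and since $\Ag$ is finite there are only finitely many such restrictions. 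Hence the perspective component also contributes only finitely many equivalence classes.

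The remaining, and most delicate, component is the assignment $\chi'$. Here the $\K_a$-clause quantifies over all states $Z' = (\chi',\I',\rho')$ with $Z \trianglelefteq_a Z'$, and the condition $\chi \sim^\I_a \chi'$ pins $\chi'$ down only on $\I_a$, leaving its values on the other agents' strategies completely free; a priori this is an infinite set of states. The key observation is that throughout the evaluation of a subformula of $\X$-depth at most $m$ each strategy is queried only on histories of duration at most $n+m$, of which there are finitely many. Consequently two assignments that agree on every such history are interchangeable for the truth value, so the assignment component likewise splits into finitely many equivalence classes, and one can effectively pick a representative in each class that still satisfies the accessibility conditions (agreement on $\I_a$, the inclusion $\I[a]\subseteq\I'$, and the two consistency requirements). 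Combining the three bounds, the $\K_a$-clause ranges over finitely many, effectively constructible states up to equality of truth value; the recursion therefore halts, which establishes decidability.

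I expect the main obstacle to be exactly this assignment analysis: unlike $\rho'$ and $\I'$, the quantification over $\chi'$ is genuinely over an infinite object space, and the reduction to finitely many equivalence classes must be carried out simultaneously with the already-shrinking duration and $\K$-depth budgets. The bookkeeping that I anticipate being fiddly is verifying that a chosen representative assignment can always be made to satisfy both the $a$-consistency of $Z'$ and the inclusion $\I[a]\subseteq\I'$, so that passing to representatives does not silently drop accessible states; once this is checked, the statement follows by assembling Lemma~\ref{l:finiteI:1} with the finiteness of $\pos$, $\Ac$, and $\Ag$.
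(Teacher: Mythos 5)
Your proposal is correct and follows essentially the same route as the paper: bound the duration of relevant histories by the maximal $\X$-nesting depth, invoke Lemma~\ref{l:finiteI:1} to truncate information perspectives to the window $\kd(\phi)+2$, and use the finiteness of positions, actions, and agents to bound the states inspected by the recursive truth evaluation. If anything, you make explicit the one step the paper's terse argument leaves implicit --- that the quantification over assignments in the $\K_a$-clause, which ranges over an infinite object space, collapses to finitely many equivalence classes because truth depends on strategies only through their values on the finitely many histories of bounded duration --- so your write-up is a more complete rendering of the same proof.
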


It is an open question whether model checking for $\langCK$-formulas is also decidable.
A consequence of Theorems~\ref{th:ck:1} and \ref{th:ck:2} is that the common knowledge part of an information perspective matters for evaluating $\langCK$-formulas.  Thus, for  $\langCK$-formulas,   a restriction to a finite information perspective will not work, and Lemma~\ref{l:finiteI:1} cannot be adapted to general information perspectives.  

One approach to show decidability of model checking for $\langCK$-formulas could be to consider information perspectives~$\I$ for which there are a finite information perspective $\I_1$ and  a set $X \subseteq \Ag$ such that
\[
\I = \I_1 \cup \{ wx \ |\ w\in \Ag^{\geq 2} \text{ and } x \in X  \}.
\]
This provides a finite representation of an information perspective with common knowledge and could make it possible to establish decidable model checking for $\langCK$.

Another question is whether the proof of Lemma~\ref{l:finiteI:1} can be adapted to the case where only truthful information perspectives are allowed.  The problem is that $\I_0$ as defined in~\eqref{eq:mc:4} need not be truthful in general.

Finally, we may ask what happens if we add the until-operator to $\lang$.  However,  it is proved in \cite{MeydenShilov99} that in distributed systems with perfect recall,  model checking is undecidable for a language with operators for until and common knowledge.

\section{Conclusion}

In this work,  we introduced a model that can express first-order,  higher-order, and common knowledge of strategies. 
We studied how higher-order knowledge of strategies affects agents' strategic abilities in the game Hanabi, and we showed that common knowledge of strategies is necessary to solve the consensus task.  Finally, we established the decidability of the model checking problem for the language without common knowledge.

Two important questions left for future work are studying the model checking problem in the presence of common knowledge
and extending our approach to strategy logic,  i.e.~to a language that treats strategies as explicit first-order objects.


\end{document}